\def\llncs{0}
\def\fullpage{1}
\def\anonymous{0}
\def\authnote{1}
\def\draft{0} 
\def\notxfont{0}
\def\submission{0} 

\ifnum\submission=1
\def\anonymous{1}
\def\llncs{1}
\def\fullpage{0}
\fi

\ifnum\llncs=1 	
\documentclass[envcountsect,a4paper,runningheads,11pt]{llncs}
\else
    \documentclass[letterpaper,hmargin=1.05in,vmargin=1.05in,11pt]{article}
    \ifnum\fullpage=1
        \usepackage{fullpage}
    \fi
\fi

\usepackage[%
  colorlinks=true,
  citecolor=blue,
  backref=page
]{hyperref}

\usepackage{amsmath, amsfonts, amssymb, mathtools,amscd}

\usepackage{amsthm}

\usepackage{lmodern}
\usepackage[T1]{fontenc}
\usepackage[utf8]{inputenc}

\usepackage{arydshln} 
\usepackage{url}
\usepackage{ifthen}
\usepackage{bm}
\usepackage{multirow}
\usepackage[dvips]{graphicx}
\usepackage[usenames]{color}
\usepackage{xcolor,colortbl} 
\usepackage{threeparttable}
\usepackage{comment}
\usepackage{paralist,verbatim}
\usepackage{cases}
\usepackage{booktabs}
\usepackage{breakcites}
\usepackage{braket}
\usepackage{cancel} 
\usepackage{ascmac} 
\usepackage{framed}
\usepackage{authblk}
\usepackage{pifont}
\usepackage{qcircuit}
\usepackage{tikz}
\usepackage{xspace} 
\usetikzlibrary{decorations.pathmorphing,decorations.shapes,decorations.markings}
\definecolor{darkblue}{rgb}{0,0,0.6}
\definecolor{darkgreen}{rgb}{0,0.5,0}
\definecolor{maroon}{rgb}{0.5,0.1,0.1}
\definecolor{dpurple}{rgb}{0.2,0,0.65}

\usepackage[capitalise,noabbrev]{cleveref}
\usepackage[absolute]{textpos}
\usepackage[final]{microtype}
\usepackage[absolute]{textpos}
\usepackage{everypage}
\DeclareMathAlphabet{\mathpzc}{OT1}{pzc}{m}{it}

\usepackage{algorithmic}
\usepackage{algorithm}
\usepackage{here}

\ifnum\draft=1
    \usepackage[color]{showkeys}
    \definecolor{refkey}{cmyk}{0,0,0,.25}
    \definecolor{labelkey}{cmyk}{0,0,0,.7}
    \usepackage{CJKutf8}
\else
\fi

\newtheoremstyle{thicktheorem}%
{\topsep}
{\topsep}
{\itshape}{}%
{\bfseries}%
{.}
{ }%
{\thmname{#1}\thmnumber{ #2}%
		\thmnote{ (#3)}%
}

\newtheoremstyle{remark}
{\topsep}
{\topsep}
	{}
	{}
	{}
	{.}
	{ }
	{\textit{\thmname{#1}}\thmnumber{ #2}
			\thmnote{ (#3)}%
	}

\ifnum\llncs=0
	\theoremstyle{thicktheorem}
	\newtheorem{theorem}{Theorem}[section]
	\newtheorem{lemma}[theorem]{Lemma}
	\newtheorem{corollary}[theorem]{Corollary}
	
	\newtheorem{definition}[theorem]{Definition}

	\theoremstyle{remark}
	
	\newtheorem{remark}[theorem]{Remark}

\else
\fi

\Crefname{MyClaim}{Claim}{Claims}

	\crefname{theorem}{Theorem}{Theorems}
	\crefname{assumption}{Assumption}{Assumptions}
	\crefname{construction}{Construction}{Constructions}
	\crefname{corollary}{Corollary}{Corollaries}
	\crefname{conjecture}{Conjecture}{Conjectures}
	\crefname{definition}{Definition}{Definitions}
	\crefname{exmaple}{Example}{Examples}
	\crefname{experiment}{Experiment}{Experiments}
	\crefname{counterexample}{Counterexample}{Counterexamples}
	\crefname{lemma}{Lemma}{Lemmata}
	\crefname{observation}{Observation}{Observations}
	\crefname{proposition}{Proposition}{Propositions}
	\crefname{remark}{Remark}{Remarks}
	\crefname{claim}{Claim}{Claims}
	\crefname{fact}{Fact}{Facts}
	\crefname{note}{Note}{Notes}

\ifnum\llncs=1
 \crefname{appendix}{App.}{Appendices}
 \crefname{section}{Sec.}{Sections}
\else
\fi

\ifnum\llncs=1
\renewcommand*{\backref}[1]{}
\else
	\renewcommand*{\backref}[1]{(Cited on page~#1.)}
	\ifnum\notxfont=1
	\else
		\usepackage{newtxtext}
	\fi
\fi
\ifnum\authnote=0  
\newcommand{\mor}[1]{}
\newcommand{\minki}[1]{}
\newcommand{\takashi}[1]{}
\newcommand{\xagawa}[1]{}

\else
\newcommand{\mor}[1]{$\ll$\textsf{\color{red} Tomoyuki: { #1}}$\gg$}
\newcommand{\takashi}[1]{$\ll$\textsf{\color{orange} Takashi: { #1}}$\gg$}

\newcommand{\xagawa}[1]{$\ll$\textsf{\color{magenta} Keita: { #1}}$\gg$}

\fi

\newcommand{\SD}{\mathsf{SD}} 

\newcommand{\puzz}{\mathsf{puzz}}
\newcommand{\ans}{\mathsf{ans}}

\newcommand{\Samp}{\algo{Samp}}

\newcommand{\cA}{\mathcal{A}}
\newcommand{\cB}{\mathcal{B}}

\newcommand{\cD}{\mathcal{D}}
\newcommand{\cE}{\mathcal{E}}

\newcommand{\cH}{\mathcal{H}}

\newcommand{\cO}{\mathcal{O}}

\newcommand{\cU}{\mathcal{U}}
\newcommand{\cV}{\mathcal{V}}

\def\makeuppercase#1{
\expandafter\newcommand\csname tl#1\endcsname{\widetilde{#1}}
}

\def\makelowercase#1{
\expandafter\newcommand\csname tl#1\endcsname{\widetilde{#1}}
}

\newcommand{\secp}{\lambda}

\newcommand*{\algo}[1]{\ensuremath{\mathsf{#1}}}

\newenvironment{boxfig}[2]{\begin{figure}[#1]\fbox{\begin{minipage}{0.97\linewidth}
                        \vspace{0.2em}
                        \makebox[0.025\linewidth]{}
                        \begin{minipage}{0.95\linewidth}
            {{
                        #2 }}
                        \end{minipage}
                        \vspace{0.2em}
                        \end{minipage}}}{\end{figure}}

\newcommand{\bit}{\{0,1\}}

\newcommand{\Gen}{\algo{Gen}}

\newcommand{\Ver}{\algo{Ver}}

\newcommand{\TD}{\algo{TD}}

\newcommand{\negl}{{\mathsf{negl}}}

\newcommand{\poly}{{\mathrm{poly}}}

\makeatletter
\DeclareRobustCommand
  \myvdots{\vbox{\baselineskip4\p@ \lineskiplimit\z@
    \hbox{.}\hbox{.}\hbox{.}}}
\makeatother

\title{Quantum Pessiland}

\ifnum\anonymous=1
\ifnum\llncs=1
\author{\empty}\institute{\empty}
\else
\author{}
\fi
\else
\ifnum\llncs=1
\author{
Tomoyuki Morimae\inst{1}
} 
\institute{
 Yukawa Institute for Theoretical Physics, Kyoto University, Kyoto, Japan 
}
\else
\author[1]{Boyang Chen}
\author[2]{Tomoyuki Morimae}
\author[3,2]{Takashi Yamakawa}
\affil[1]{{\small Department of Computer Science and Technology, Tsinghua University, Beijing, China}\authorcr{\small by-chen24@mails.tsinghua.edu.cn} }
\affil[2]{{\small Yukawa Institute for Theoretical Physics, Kyoto University, Kyoto, Japan}\authorcr{\small tomoyuki.morimae@yukawa.kyoto-u.ac.jp} }
\affil[3]{{\small NTT Social Informatics Laboratories, Tokyo, Japan}\authorcr{\small takashi.yamakawa@ntt.com} }
\fi 
\fi

\date{}

\begin{document}

\maketitle

\begin{abstract}
Pessiland is a world where $\mathsf{NP}$ is hard on average but one-way functions (OWFs) do not exist [Impagliazzo 1995].
Because almost all classical cryptographic primitives imply OWFs [Impagliazzo and Luby 1989], 
there is almost no classical cryptography in Pessiland.
On the other hand, 
quantum cryptography can exist even when OWFs do not [Kretschmer 2021; Morimae and Yamakawa 2022; Ananth, Qian and Yuen 2022].
Is there a quantum analogue of
Pessiland where $\mathsf{NP}$ is hard on average but even
quantum cryptography does not exist?   
In this paper, we show that such a miserable world, {\it Quantum Pessiland}, exists:
there is a quantum oracle relative to which $\mathsf{UP}\cap\mathsf{coUP}$ is hard on average against quantum polynomial-time algorithms
with quantum advice, yet
auxiliary-input EFI pairs do not exist.
We also show 
that there is a classical oracle relative to which $\mathsf{UP}\cap\mathsf{coUP}$ is hard on average against quantum polynomial-time algorithms
with quantum advice, yet
classically-secure auxiliary-input one-way puzzles (OWPuzzs) do not exist.
Almost all quantum cryptographic primitives imply EFI pairs or OWPuzzs, and therefore these results mean that
there is almost no quantum cryptography relative to these oracles.
We further show that relative to the classical oracle,
$\mathsf{SampBQP}=\mathsf{SampBPP}$, and therefore there is no sampling-based quantum advantage in Quantum Pessiland.
Finally, because our average-case hardness of $\mathsf{UP}\cap\mathsf{coUP}$ implies 
$\mathsf{P}^{\#\mathsf{P}}\not\subseteq\mathsf{i.o.BQP/qpoly}$,
our result also implies 
that a non-relativizing proof technique is necessary 
to construct OWPuzzs solely from
$\mathsf{P}^{\#\mathsf{P}}\not\subseteq\mathsf{i.o.BQP/qpoly}$,
which gives a partial negative answer to the open problem of [Khurana and Tomer 2025].
\end{abstract}

\ifnum\submission=0
\newpage
\setcounter{tocdepth}{2}
\tableofcontents
\newpage
\fi

\section{Introduction}

Pessiland~\cite{Impagliazzo95} is a world where $\mathsf{NP}$ is hard on average but one-way functions (OWFs) do not exist.
Because almost all classical cryptographic primitives imply OWFs~\cite{FOCS:ImpLub89}, 
Pessiland is a miserable world where essentially no classical cryptography exists even though $\mathsf{NP}$ is hard on average.

In quantum cryptography, by contrast, OWFs are not necessarily the minimal assumption~\cite{Kre21,C:MorYam22,C:AnaQiaYue22}.
Several quantum analogues of OWFs, pseudorandom generators (PRGs), and pseudorandom functions (PRFs) have been introduced, 
such as pseudorandom unitaries (PRUs)~\cite{C:JiLiuSon18}, pseudorandom state generators (PRSGs)~\cite{C:JiLiuSon18},
one-way state generators (OWSGs)~\cite{C:MorYam22}, one-way puzzles (OWPuzzs)~\cite{STOC:KhuTom24},
and EFI pairs~\cite{ITCS:BCQ23}.
They are potentially weaker than OWFs~\cite{Kre21,STOC:KQST23,STOC:KreQiaTal25,STOC:LomMaWri24}, yet
they still suffice for many applications, including private-key quantum money~\cite{C:JiLiuSon18}, secret-key encryption (SKE)~\cite{C:AnaQiaYue22},
digital signatures~\cite{C:MorYam22}, commitments~\cite{C:MorYam22,C:AnaQiaYue22,ITCS:BCQ23}, and multiparty computation~\cite{C:MorYam22,C:AnaQiaYue22}.
The fact that quantum cryptography may exist even without OWFs 
raises a natural question:
\begin{center}
\emph{Is there ``Quantum Pessiland'' where $\mathsf{NP}$ is hard on average but even
quantum cryptography does not exist?}    
\end{center}

\subsection{Main Results}
In this paper, we show that the answer is yes.
We prove the following two theorems.

\begin{theorem}
\label{thm:classicaloracle}
There exists a classical oracle $\cO$ relative to which the following statements hold.
\begin{enumerate}
    \item 
    There exists a language $L^\cO\in\mathsf{UP}^\cO\cap\mathsf{coUP}^\cO$ that is strongly hard on average
against quantum polynomial-time algorithms that can query $\cO$ and receive polynomial-size oracle-dependent quantum advice.\footnote{Because
$\mathsf{UP}\cap\mathsf{coUP}\subseteq\mathsf{NP}\cap\mathsf{coNP}$, the same result also holds for
$\mathsf{NP}\cap\mathsf{coNP}$.}
\item    
There exists a relation $R^\cO\in\mathsf{TFUP}^\cO$ that is  
average-case hard against quantum polynomial-time algorithms that can query $\cO$ and receive polynomial-size oracle-dependent quantum advice.
\item 
$\mathsf{SampBQP}^\cO=\mathsf{SampBPP}^\cO$.
(And therefore $\mathsf{BQP}^\cO=\mathsf{BPP}^\cO$.)
\item 
Classically-secure auxiliary-input OWPuzzs do not exist.
\end{enumerate}
\end{theorem}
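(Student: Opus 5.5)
Take $\cO=(\pi,\mathsf{PSPACE})$ (or a $\mathsf{P}^{\#\mathsf{P}}$-complete oracle), where $\pi=\{\pi_n\}$ is a family of uniformly random permutations on $\{0,1\}^n$, sampled once and fixed. The hard language is defined from $\pi$: $L^\cO=\{(y,i): \text{the } i\text{-th bit of }\pi_n^{-1}(y)\text{ is }1\}$. It lies in $\mathsf{UP}^\cO\cap\mathsf{coUP}^\cO$, since the unique witness is $x=\pi_n^{-1}(y)$, checked with one query, and membership and non-membership are both certified by that witness. The relation $R^\cO=\{(y,x):\pi_n(x)=y\}$ is in $\mathsf{TFUP}^\cO$ because $\pi_n$ is a permutation. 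The plan has three parts.

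\textbf{Hardness (items 1 and 2).} I would prove that inverting a random permutation, and even predicting a bit of its preimage with advantage $1/\poly$ over random $y$, is hard for quantum algorithms with $\poly$ queries and $\poly$-qubit advice. This should hold even with unbounded access to $\mathsf{PSPACE}$, since that oracle is independent of $\pi$. The standard tool is the quantum-advice lower bound for permutation inversion (Nayebi--Aaronson--Belovs--Trevisan / Chung--Guo--Liu--Qian style). One first reduces quantum advice to classical advice with an $O(\cdot)$ blowup via Aaronson's trick, then uses the presampling / bit-fixing technique (Chung et al., Guo et al.) for permutations to obtain success probability $\negl$ on average over $y$. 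To make the bound hold for a fixed oracle rather than on average over $\pi$, I would apply a Borel--Cantelli argument over a countable enumeration of advice-taking machines: the failure probability for each machine and each $n$ is summably small, so with probability $1$ over $\pi$ every machine fails for all sufficiently large $n$. This yields strong average-case hardness. The predicting-a-bit version follows by a Goldreich--Levin-type reduction, or directly from the same compression bound.

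\textbf{$\mathsf{SampBQP}^\cO=\mathsf{SampBPP}^\cO$ (item 3).} A quantum sampler queries $\pi$ on superpositions. I would simulate it classically with $\mathsf{PSPACE}$ by recording only a polynomial number of relevant classical queries. The plan is to show that, for a random permutation, a $T$-query quantum algorithm's output distribution is $1/\poly$-close to that of the same algorithm run against an oracle that agrees with $\pi$ on a polynomial-size set $S$ of heavy queries and is otherwise a freshly sampled permutation. The set $S$ would be found adaptively in the style of Aaronson--Ambainis, or via a "quantum lifting / heavy-query" lemma. Given $S$ and $\pi|_S$, the remaining computation is averaged over random permutations consistent with $\pi|_S$, and the output distribution can be sampled by $\mathsf{PSPACE}$ through exact counting. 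This step is the main obstacle: heavy-query extraction for general quantum samplers is delicate, and it works here only because the oracle is random, so I would rely on the known result that $\mathsf{BQP}=\mathsf{BPP}$ relative to random oracles plus $\mathsf{PSPACE}$ when the query behaviour is "typical." If that fails, I would switch $\pi$ to a random function, make $L$ a predicate over random-function preimages, and use Yamakawa--Zhandry-style simulation.

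\textbf{No OWPuzz (item 4).} A classically-secure auxiliary-input OWPuzz is a $\mathsf{QPT}^\cO$ sampler of $(\puzz,\ans)$ pairs together with an unbounded-time verifier. By item 3, with auxiliary input the sampler's distribution is approximated by a $\mathsf{PPT}^\cO$ sampler. Breaking a classically-secure puzzle is a distributional inversion task: given $\puzz$, output $\ans$ from the conditional distribution. For a $\mathsf{PPT}^{\pi,\mathsf{PSPACE}}$ sampler, this inversion can be done by recording the polynomially many $\pi$-queries consistent with the transcript and using $\mathsf{PSPACE}$ to resample randomness conditioned on $\puzz$, treating unqueried $\pi$-values as lazily sampled. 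This mirrors the Impagliazzo--Rudich argument that $\mathsf{PSPACE}$ plus a random oracle breaks distributional OWFs, adapted to the auxiliary-input setting. The result is an adversary that succeeds with non-negligible probability for infinitely many (indeed all) auxiliary inputs, which contradicts security.
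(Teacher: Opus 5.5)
\textbf{Verdict: there is a fatal gap.} The problem is your choice of oracle, not any single step. You give \emph{forward} access to $\pi$, and with forward access item 4 is false.

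\textbf{Why item 4 fails for your oracle.} Let $\Samp^{\cO}(x)$ sample $w\gets\{0,1\}^{|x|}$ and output $(\puzz,\ans)=(\pi_{|x|}(w),w)$, and let $\Ver$ check $\pi_{|x|}(\ans')=\puzz$. Because $\pi_{|x|}(w)$ is a uniformly random $y$, breaking this puzzle is exactly inverting $\pi$ on a random point. That is what your items 1 and 2 say is hard. Classically it is just a $T/2^{n}$ query bound, and $\mathsf{PSPACE}$ does not help because it is independent of $\pi$. So relative to $(\pi,\mathsf{PSPACE})$ one-way functions, and hence OWPuzzs, exist, and your hardness part contradicts your item-4 part.

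\textbf{The attack you invoke does not exist.} Impagliazzo--Rudich shows that a random oracle plus $\mathsf{PSPACE}$ \emph{yields} one-way functions; only key agreement is broken there. If you resample the unqueried $\pi$-values lazily, you get an answer consistent with a fake oracle. The verifier uses the true $\pi$ and will reject it.

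\textbf{Item 3 is also unsupported.} $\mathsf{BQP}=\mathsf{BPP}$ relative to a random oracle plus $\mathsf{PSPACE}$ is not a known result; it is essentially the Aaronson--Ambainis conjecture. For sampling problems it actually fails for random oracles. Yamakawa--Zhandry give a search problem that quantum algorithms solve with polynomially many queries but query-bounded classical algorithms cannot, and this turns into a $\mathsf{SampBQP}$ versus $\mathsf{SampBPP}$ separation. A truncated random permutation is query-indistinguishable from a random function, so the same separation should appear for your $\pi$. Your fallback to random functions therefore goes in the wrong direction.

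\textbf{The missing idea.} Expose only the verification oracle $V^{\pi}(\ell,u,v)=[\pi_\ell(u)=v]$ together with $\mathrm{QBF}$, as the paper does following Wee.
\begin{itemize}
\item Nobody can then sample a pair $(y,\pi_n^{-1}(y))$.
\item The positive points of $V^{\pi}$ are sparse and hard to find. So $V^{\pi}$ can be replaced by a small explicit patch of positive points, with every other query answered by $0$. The patched algorithm uses only $\mathrm{QBF}$, so a $\mathsf{PPT}^{\mathrm{QBF}}$ algorithm can simulate it and conditionally resample it. This gives items 3 and 4.
\item An auxiliary input may encode positive points of $V^{\pi}$, so the patch cannot simply be empty. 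It must be \emph{learned} by a uniform extractor: brute-force the short blocks, then repeatedly run the algorithm against the current patch up to a random query, measure it, and test the outcome against $V^{\pi}$.
\item The simulation error must hold simultaneously for all $2^{n}$ auxiliary inputs of each length. This is why the paper needs the exponential-moment bound $\mathbb{E}[2^{K}]\le e$ on the number of discoveries, followed by Markov and Borel--Cantelli. Your plan has no counterpart to this.
\end{itemize}

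\textbf{A smaller point.} Goldreich--Levin does not make a fixed bit of $\pi_n^{-1}(y)$ hard to predict. That is why the paper's language is $\langle\pi_n^{-1}(y),r\rangle$ with $r$ part of the instance: quantum Goldreich--Levin with advice then reduces decision hardness to the permutation-inversion bound.
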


\begin{theorem}
\label{thm:main-quantum-oracle}
There exists a quantum unitary oracle $\cU$ and
a classical oracle $\cV$ relative to which the following statements hold.
\begin{enumerate}
\item 
There exists a language $L^\cV\in\mathsf{UP}^\cV\cap\mathsf{coUP}^\cV$ that is strongly hard on average
against quantum polynomial-time algorithms that can query $(\cV,\cU)$ and receive polynomial-size oracle-dependent quantum advice.
\item    
There exists a relation $R^\cV\in\mathsf{TFUP}^\cV$ that is  
average-case hard against quantum polynomial-time algorithms that can query $(\cV,\cU)$ and receive polynomial-size oracle-dependent quantum advice.
\item 
Auxiliary-input EFI pairs do not exist relative to $(\cV,\cU)$.
\end{enumerate}
\end{theorem}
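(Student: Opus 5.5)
We take $\cV=(P,\mathsf{Inv})$, where $P=\{P_n\}$ is a uniformly random permutation on $\bit^n$ for each $n$. The quantum oracle $\cU$ is a $\mathsf{QPSPACE}$-style unitary oracle, or more concretely an oracle that implements the ``state-synthesis / purifying'' power needed to break EFI: on input the description of a polynomial-size circuit with access to $\cV$ and an auxiliary input, $\cU$ applies the corresponding Uhlmann unitary.

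\textbf{Hardness part.} The language is $L^\cV=\{y: \text{the first bit of } P_n^{-1}(y)=1\}$. It is in $\mathsf{UP}^\cV\cap\mathsf{coUP}^\cV$, with witness $x=P^{-1}(y)$, which is unique and verified by one query. The relation is $R^\cV=\{(y,x):P(x)=y\}\in\mathsf{TFUP}^\cV$. For average-case hardness against $(\cV,\cU)$-query algorithms with quantum advice, I would first argue that every $\cU$ query can be simulated by an algorithm making polynomially many queries to $P$, or else by queries whose answers do not depend on the inversion challenge. This is the main obstacle, because $\cU$ must be powerful enough to break EFI yet too weak to invert $P$. To handle it, I would make $\cU$ act only on circuits querying $P$ and define it through the ideal Uhlmann transformation, which is information-theoretically determined by $P$. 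I would then bound the ``inversion information'' using a compressed-oracle / presampling argument: non-uniform advice is handled by bit-fixing (the Chung–Guo–Liu–Qian / Liu style $S$-bit advice lower bound for permutation inversion with quantum advice, e.g.\ the $ST/N$ bound), giving success probability $1/2+\negl$ for deciding $L$ and $\negl$ for inverting.

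\textbf{Breaking EFI.} An auxiliary-input EFI pair is a family $\mathsf{StateGen}(1^\lambda,b,z)$ that is statistically far for every $z$ but computationally indistinguishable for infinitely many $z$, or for all $z$ in the auxiliary-input sense. Given $z$, the adversary uses $\cU$ to perform the optimal Helstrom measurement distinguishing $\rho_0^z$ from $\rho_1^z$. This measurement is a fixed unitary determined by the circuit description and by $\cV$, so it is exactly what $\cU$ provides. Statistical farness then yields a noticeable advantage for every $z$, which contradicts security. The subtlety is that $\rho_b$ depends on the whole oracle $\cV$. I would resolve this by letting $\cU$ take $\cV$-relative circuits as input, and by noting that the hardness argument only needs $\cU$'s outputs to be efficiently simulable given polynomially many classical $P$ queries up to small error. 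For that I would invoke the standard fact that a $q$-query state is $\epsilon$-close to one depending only on the $\mathrm{poly}(q/\epsilon)$ most-queried points (a heavy-query lemma or the compressed-oracle approach). Having the adversary learn those points by sampling then reduces the problem to one where $P$ is random away from the learned points.

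The order of work is: (i) define the oracles; (ii) prove the EFI attack, which is immediate; (iii) prove a simulation lemma for $\cU$ queries via heavy queries; (iv) combine with advice-based permutation-inversion bounds to obtain strong average-case hardness and hardness of the relation. I expect step (iii) to be the crux.
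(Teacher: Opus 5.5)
There is a genuine gap. With the oracles you propose, Items~2 and~3 cannot both hold, and the $\cU$ you define breaks the hardness outright.

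First, $\cV$ gives forward access to $P$, and your hard relation is inverting $P$ on a uniform image. So whenever Item~2 holds, $P$ is a one-way permutation relative to $(\cV,\cU)$. Then $\xi_b=(P(s),t,\langle s,t\rangle\oplus b)$ with uniform $s,t$ is an EFI pair. The supports of $\xi_0$ and $\xi_1$ are disjoint. Non-existence of auxiliary-input EFI pairs would give a distinguisher with advantage at least $\lambda^{-c}$ for all large $\lambda$, and Goldreich--Levin turns it into an inverter for $P$. Hence any $\cU$ strong enough for Item~3 must break Item~2.

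Your $\cU$ does so with a single query. Let $C_b$ sample a uniform $s$ and output the classical pair $(P(s),s_1\oplus b)$. Then $\sigma_1-\sigma_0$ is diagonal, and its positive eigenspace is spanned by the vectors $|y',1-(P^{-1}(y'))_1\rangle$. A Helstrom query on $|y,0\rangle$ therefore flips the answer bit iff $(P^{-1}(y))_1=1$. This decides your $L^\cV$ exactly, and repeating it bit by bit inverts $P$. Consequently your step~(iii) is false: the Helstrom/Uhlmann operation for $P$-relative circuits is a global function of $P$ and cannot be simulated from polynomially many queried points. (If $\mathsf{Inv}$ in $\cV=(P,\mathsf{Inv})$ means $P^{-1}$, hardness fails even more trivially.)

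The missing ideas are the paper's two design choices and the lemma they force.

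\textbf{Verification-only classical oracle.} The classical oracle is only $V^\pi$, with $V^\pi(\ell,u,v)=1$ iff $\pi_\ell(u)=v$. Inverting $\pi_n$ on a uniform $y$ remains hard, but no efficient algorithm can sample solved instances $(\pi_n(s),s)$. This is what rules out OWFs and EFI pairs, as in the Impagliazzo--Rudich/Wee Pessiland.

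\textbf{$\pi$-independent Helstrom oracle.} The Helstrom oracle $\cH$ accepts only circuits that do \emph{not} query $V^\pi$. So $\cH$ is independent of $\pi$, and Items~1--2 follow directly from \cref{thm:ABCGY26} with $\cH$ as a side oracle plus quantum Goldreich--Levin with advice. No simulation of $\cH$-queries is ever needed.

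\textbf{The patching lemma.} The price is that breaking EFI is no longer immediate: the distinguisher must first remove $V^\pi$ from the generator. This is \cref{lem:uniform-public-input-erasure-complete}.
\begin{itemize}
\item A uniform extractor runs $G$ with the current patch up to a random query, measures the query, and tests the resulting triple against $V^\pi$.
\item A BBBV hybrid bounds the simulation error by $2T\sqrt{\kappa_\pi(x)/R}$, where $\kappa_\pi(x)$ is the expected number of discoveries.
\item Each stage discovers a positive with probability at most $1/R$ given the transcript. This yields the exponential-moment bound $\mathbb{E}[2^K]\le e$, which beats the union bound over all $2^n$ auxiliary inputs.
\item Borel--Cantelli then fixes a single good $\pi$.
\end{itemize}
Only after this are the $\pi$-free patched circuits $G^{\gamma,\cH}(x,\cdot)$ handed to $\cH$. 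So the real crux is simulating the generator's $V^\pi$-queries uniformly over all auxiliary inputs, not simulating $\cU$'s queries.
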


An EFI pair (generator)~\cite{ITCS:BCQ23} is a quantum polynomial-time (QPT) algorithm $\Gen(1^\secp,b)\to\rho_{\secp,b}$ that takes the security parameter $1^\secp$ and a bit $b\in\bit$ as input
and outputs a quantum state $\rho_{\secp,b}$ such that $\{\rho_{\secp,0}\}_{\secp\in\mathbb{N}}$ and $\{\rho_{\secp,1}\}_{\secp\in\mathbb{N}}$ are statistically far but computationally indistinguishable.
Almost all classical/quantum cryptographic primitives imply EFI pairs.
Moreover, EFI pairs are existentially equivalent to quantum non-interactive bit commitments
and multi-party computations~\cite{ITCS:BCQ23}.

Auxiliary-input EFI pairs
are a weaker variant of EFI pairs. 
An auxiliary-input EFI pair is a uniform QPT algorithm
$\mathsf{Gen}(x,b) \to \xi_{b,x}$ that takes a classical auxiliary input
$x \in \{0,1\}^*$ and a bit $b \in \{0,1\}$ as input and outputs a quantum
state $\xi_{b,x}$ such that there exists an infinite set
$S \subseteq \{0,1\}^*$ on which $\{\xi_{0,x}\}_{x \in S}$ and
$\{\xi_{1,x}\}_{x \in S}$ are statistically far but computationally
indistinguishable. (For the precise definition, see \cref{def:AIEFI}.)
Therefore, \cref{thm:main-quantum-oracle} also rules out the standard EFI pairs.
Almost all classical/quantum cryptographic primitives imply EFI pairs, and therefore 
non-existence of EFI pairs means that there is almost no classical/quantum cryptography.

OWPuzzs~\cite{STOC:KhuTom24} are a quantum analogue of OWFs. 
A OWPuzz is a pair $(\Samp,\Ver)$ of two algorithms.
$\Samp(1^\secp)\to(\puzz,\ans)$ is a QPT algorithm that takes the security parameter $1^\secp$ as input
and outputs two classical bit strings $\puzz$ and $\ans$.
$\Ver(\puzz,\ans')\to\top/\bot$ is a not-necessarily-efficient algorithm that takes $\puzz$ and a classical bit string $\ans'$ and
outputs $\top$ or $\bot$.
The correctness requires that 
\begin{align}
\Pr[\top\gets\Ver(\puzz,\ans):(\puzz,\ans)\gets\Samp(1^\secp)]\ge1-\negl(\secp)
\end{align}
and the security requires that
\begin{align}
\Pr[\top\gets\Ver(\puzz,\ans'):(\puzz,\ans)\gets\Samp(1^\secp),\ans'\gets\cA(1^\secp,\puzz)]\le\negl(\secp)
\end{align}
for any QPT adversary $\cA$.
We say that a OWPuzz is classically secure if the security is required to hold only against probabilistic polynomial-time (PPT) adversaries.
Almost all classical/quantum cryptographic primitives imply OWPuzzs, and OWPuzzs imply EFI pairs~\cite{STOC:KhuTom24}.

Auxiliary-input OWPuzzs are a weaker variant of OWPuzzs.
An auxiliary-input OWPuzz is a pair $(\mathsf{Samp},\mathsf{Ver})$ of
algorithms, where $\mathsf{Samp}(x)\to(\mathsf{puzz},\mathsf{ans})$ is a
uniform QPT algorithm that takes a classical auxiliary input
$x\in\{0,1\}^*$, and $\mathsf{Ver}(x,\puzz,\ans)\to \top/\bot$ is a not-necessarily-efficient algorithm.
The security is
required only on an infinite set of auxiliary inputs that may depend on
the adversary: for every adversary $\mathcal{A}$, there exists an
infinite set $I_{\mathcal{A}}\subseteq\{0,1\}^*$ such that 
for every $x\in I_{\mathcal{A}}$,
$\mathcal{A}$ finds a valid answer with only negligible probability.  
(For the precise definition, see \cref{def:auxiliary-input-owpuzz}.)
OWPuzzs imply classically-secure auxiliary-input OWPuzzs,
and therefore \cref{thm:classicaloracle} also rules out the standard OWPuzzs.
Most classical/quantum cryptographic primitives imply OWPuzzs.
Hence \cref{thm:classicaloracle} also rules out most classical/quantum cryptographic primitives.

\subsection{Implications}
Because the average-case hardness of $\mathsf{UP}\cap\mathsf{coUP}$ in \cref{thm:classicaloracle} implies that of $\mathsf{PP}$,
we also have the worst-case separation $\mathsf{PP}\not\subseteq\mathsf{i.o.BQP/qpoly}$.
Moreover, $\mathsf{PP}\subseteq\mathsf{P}^{\#\mathsf{P}}$.
Therefore \cref{thm:classicaloracle} gives the following corollary.
\begin{corollary}
\label{coro:KT}
There exists a classical oracle relative to which $\mathsf{PP}$ is strongly hard on average against QPT algorithms with quantum advice
(and therefore $\mathsf{P}^{\#\mathsf{P}}\not\subseteq\mathsf{i.o.BQP/qpoly}$) but
OWPuzzs do not exist.
\end{corollary}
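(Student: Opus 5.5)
We obtain \cref{coro:KT} directly from the classical oracle $\cO$ of \cref{thm:classicaloracle}, in three steps: transfer the hardness from $\mathsf{UP}\cap\mathsf{coUP}$ to $\mathsf{PP}$, pass from average-case to worst-case to get the $\mathsf{P}^{\#\mathsf{P}}$ separation, and read off non-existence of OWPuzzs from item~4.

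\textbf{Hardness of $\mathsf{PP}$.} By item~1 there is $L^\cO\in\mathsf{UP}^\cO\cap\mathsf{coUP}^\cO$ that is strongly hard on average against QPT algorithms querying $\cO$ with oracle-dependent quantum advice. The inclusion $\mathsf{UP}^\cO\subseteq\mathsf{NP}^\cO\subseteq\mathsf{PP}^\cO$ relativizes: the usual simulation of an NP machine by a PP machine only runs the NP machine as a subroutine, so it works for any oracle. Hence $L^\cO\in\mathsf{PP}^\cO$. Average-case hardness is a property of the pair consisting of $L^\cO$ and its input distribution, not of the class it sits in. So the same pair witnesses that $\mathsf{PP}^\cO$ is strongly hard on average against the same adversaries.

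\textbf{Worst-case separation.} Suppose $L^\cO\in\mathsf{i.o.BQP}^\cO/\mathsf{qpoly}$. Then on infinitely many input lengths $n$ there is a QPT algorithm with quantum advice that decides every $x\in\{0,1\}^n$ with probability at least $2/3$. Amplification with polynomially many copies of the advice pushes this to $1-2^{-n}$. That success probability on every input contradicts strong average-case hardness at those lengths.

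Here we need the exact meaning of "strongly hard on average" from the paper's definition. Whatever its precise form, it rules out success probability $1-\negl$ on infinitely many lengths, and an $\mathsf{i.o.}$ worst-case solver gives exactly that. Therefore $\mathsf{PP}^\cO\not\subseteq\mathsf{i.o.BQP}^\cO/\mathsf{qpoly}$. Since $\mathsf{PP}^\cO\subseteq\mathsf{P}^{\#\mathsf{P}^\cO}$ relativizes, $\mathsf{P}^{\#\mathsf{P}}\not\subseteq\mathsf{i.o.BQP/qpoly}$ relative to $\cO$.

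\textbf{No OWPuzzs.} Let $(\Samp,\Ver)$ be a (quantum-secure) OWPuzz relative to $\cO$. Define an auxiliary-input puzzle that ignores the auxiliary input $x$ and runs $\Samp(1^{|x|})$. Its correctness follows from that of $(\Samp,\Ver)$. Security of the original puzzle holds against all QPT adversaries, hence against all PPT adversaries. It therefore holds for every auxiliary input of all sufficiently large lengths, which is an infinite set. So this is a classically-secure auxiliary-input OWPuzz, contradicting item~4.

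The only step requiring care is matching the worst-case-to-average-case direction with the paper's exact definition of strong average-case hardness. The rest is bookkeeping showing that the standard inclusions relativize.
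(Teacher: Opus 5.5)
Your proof is correct and follows the same route as the paper. It uses the relativizing inclusions $\mathsf{UP}\cap\mathsf{coUP}\subseteq\mathsf{PP}\subseteq\mathsf{P}^{\#\mathsf{P}}$ to transfer item~1 of \cref{thm:classicaloracle}, deduces the $\mathsf{i.o.BQP/qpoly}$ separation from average-case hardness against quantum advice, and turns a standard OWPuzz into a classically-secure auxiliary-input one to contradict item~4. Amplification is not even needed, since success $\ge 2/3$ already exceeds $1/2+\negl(N)$.

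One small correction in the last step. Standard security holds only against uniform adversaries, and the string $x$ can act as non-uniform advice for $\mathcal{A}'(x,\puzz)$. So you cannot claim security for \emph{every} auxiliary input of large length. Instead, apply standard security to the uniform adversary $\puzz\mapsto\mathcal{A}'(0^{\secp},\puzz)$ and take $I_{\mathcal{A}'}:=\{0^{\secp}\}_{\secp\in\mathbb{N}}$.
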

Khurana and Tomer~\cite{STOC:KhuTom25} constructed OWPuzzs from 
$\mathsf{P}^{\#\mathsf{P}}\not\subseteq\mathsf{i.o.BQP/qpoly}$ 
and
an assumption that implies sampling-based quantum advantage (the so-called ``average-case  
$\mathsf{\#P}$ hardness assumption''~\cite{STOC:AarArk11,BreMonShe16}).
Whether OWPuzzs can be constructed solely from 
the average-case hardness of $\mathsf{PP}$ or
the worst-case separation $\mathsf{PP}\neq\mathsf{BQP}$ has remained open.
\cref{coro:KT} suggests that a non-relativizing technique would be required to resolve the open problem.

Morimae, Shirakawa, and Yamakawa~\cite{STOC:MorShiYam25} showed the equivalence between classically-secure OWPuzzs
and inefficient-verifier proofs of quantumness (IV-PoQ), which are interactive protocols demonstrating quantum 
advantage~\cite{C:MorYam24}.\footnote{
In \cite{STOC:MorShiYam25}, the security of OWPuzzs only requires $1-1/poly$ security 
(see their Remark 2.15 and 2.16). However, our proof breaks even such weak OWPuzzs.}
\cref{thm:classicaloracle} therefore also gives the following corollary,
which suggests that IV-PoQ cannot be constructed in a black-box way
from the average-case hardness of $\mathsf{UP}\cap\mathsf{coUP}$.
\begin{corollary}
There exists a classical oracle relative to which $\mathsf{UP}\cap\mathsf{coUP}$ is strongly hard on average against QPT algorithms with quantum advice,
but
IV-PoQ do not exist.
\end{corollary}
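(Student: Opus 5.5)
This corollary follows by combining \cref{thm:classicaloracle} with the equivalence of Morimae, Shirakawa, and Yamakawa~\cite{STOC:MorShiYam25} between classically-secure OWPuzzs and IV-PoQ. Take the classical oracle $\cO$ of \cref{thm:classicaloracle}. Item~1 of that theorem already gives a language $L^\cO\in\mathsf{UP}^\cO\cap\mathsf{coUP}^\cO$ that is strongly hard on average against QPT algorithms with quantum advice. So the only remaining task is to show that IV-PoQ do not exist relative to $\cO$.

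For this I argue by contrapositive. Suppose an IV-PoQ relative to $\cO$ exists. By the IV-PoQ $\Rightarrow$ OWPuzz direction of \cite{STOC:MorShiYam25}, relativized to $\cO$, we obtain a classically-secure OWPuzz relative to $\cO$. I need to check two points here.

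\begin{itemize}
\item \emph{Relativization.} The construction of the puzzle from the honest quantum prover and the classical verifier is black-box in the protocol. The security reduction converts a PPT puzzle-inverter into a PPT cheating prover, and it is also black-box. Both steps go through when every party additionally has oracle access to $\cO$.
\item \emph{Security strength.} As noted in the footnote, \cite{STOC:MorShiYam25} only obtains OWPuzzs with $1-1/\poly$ security. So I must check that item~4 of \cref{thm:classicaloracle} rules out even these weak OWPuzzs.
\end{itemize}

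Next I pass to the auxiliary-input setting. A (weakly) classically-secure OWPuzz $(\Samp,\Ver)$ yields a classically-secure auxiliary-input OWPuzz by setting $\Samp'(x)=\Samp(1^{|x|})$. Security holding for all large $\secp$ then gives an infinite set of good auxiliary inputs for every adversary. The weak-security version needs care. Item~4 is proved by an attack: a PPT algorithm with oracle access to $\cO$ inverts any auxiliary-input puzzle sampler on all but finitely many $x$. I would verify that this attack succeeds with probability $1-o(1)$, or at least with more than any fixed $1/\poly$ slack. This is the main obstacle, since it requires looking inside the proof of item~4 rather than just quoting its statement. I expect it to hold because such attacks typically sample the answer distribution conditioned on the puzzle to within arbitrary inverse-polynomial accuracy (for instance, via $\mathsf{SampBQP}^\cO=\mathsf{SampBPP}^\cO$ together with a $\mathsf{PP}$-type oracle component).

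Either way, the existence of an IV-PoQ contradicts item~4. Hence IV-PoQ do not exist relative to $\cO$, and the corollary follows.
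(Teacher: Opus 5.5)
Your proposal is correct and follows the same route as the paper. Item~1 of \cref{thm:classicaloracle} gives the hardness, and the relativized IV-PoQ $\Rightarrow$ classically-secure OWPuzz direction of \cite{STOC:MorShiYam25}, combined with item~4, rules out IV-PoQ; the only subtlety is the $1-1/\poly$ security, which the paper handles with a one-line footnote.

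The check you defer is immediate once you look inside the proof of item~4. That proof works through the patching lemma and the QBF-based conditional sampler $\mathsf{Cond}$, not through $\mathsf{SampBQP}^\cO=\mathsf{SampBPP}^\cO$. For every polynomial $p$ it gives a uniform PPT adversary whose $(\puzz,\widehat{\ans})$ is within statistical distance $1/p(n)$ of $(\puzz,\ans)$ on all sufficiently long inputs. Its success probability is therefore at least $1-\negl(n)-1/p(n)$, which exceeds the weak-security bound $1-1/q(n)$ once you take $p=2q$.
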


PoQ imply IV-PoQ, and therefore the above corollary also suggests that PoQ do not exist relative to the oracle.
No relation is known between IV-PoQ and $\mathsf{SampBQP}\neq\mathsf{SampBPP}$.
However, 
\cref{thm:classicaloracle} also shows 
$\mathsf{SampBQP}^\cO=\mathsf{SampBPP}^\cO$ relative to a classical oracle $\cO$.
Therefore, our results indicate that there is essentially no quantum advantage in Quantum Pessiland.

\subsection{Technical Overview}
Our starting point is the classical construction of Pessiland by
Wee~\cite{TCC:Wee06} (building on an unpublished result of Impagliazzo
and Rudich). Let $\pi=(\pi_1,\pi_2,\dots)$, where each $\pi_\ell$ is a
uniformly random permutation over $\{0,1\}^{\ell}$, and let $V^{\pi}$
be the \emph{verification} oracle
\begin{align}
V^{\pi}(\ell,u,v):=
\begin{cases}
1 & |u|=|v|=\ell \text{ and } \pi_\ell(u)=v,\\
0 & \text{otherwise}.
\end{cases}
\end{align}
Let QBF be an oracle that solves any PSPACE-complete problem.
Consider the oracle
\begin{align}
\cO\coloneqq (V^\pi,\operatorname{QBF}).    
\end{align}
\cite{TCC:Wee06} showed that, relative to $\cO$, there is a language in $\mathsf{NP}\cap\mathsf{coNP}$
(in fact $\mathsf{UP}\cap\mathsf{coUP}$) that is hard on average, but OWFs do not exist.

The result is shown as follows. 
First, the inverse $\pi^{-1}_n(y)$ is information-theoretically determined by
$V^{\pi}$ but hard to find: an algorithm making polynomially many
queries to $V^\pi$ essentially never receives the answer $1$ on a fresh query, so
the oracle it sees is indistinguishable from the trivial oracle $Z$ that always outputs zero. This
yields an average-case hard language in $\mathsf{NP}\cap\mathsf{coNP}$ (in fact
$\mathsf{UP}\cap\mathsf{coUP}$) via the Goldreich--Levin predicate
$\langle\pi^{-1}_n(y),r\rangle$. On the other hand, the same
observation shows that no candidate OWF
$f^{V^{\pi},\mathrm{QBF}}$ can effectively use $V^{\pi}$, and therefore 
$f^{V^\pi,\mathrm{QBF}}$ can be simulated by 
$f^{Z,\mathrm{QBF}}$ and the latter
can be
inverted in polynomial time with the help of the QBF oracle. 

Our goal is to carry
out this program in the quantum setting, where the candidate primitives
are quantum algorithms that query the oracles in superposition, where
the hardness must hold even against QPT algorithms with quantum advice, and where the
primitives to be ruled out are the \emph{auxiliary-input} versions of
OWPuzzs and EFI pairs.\footnote{\cite{TCC:Wee06} also ruled out auxiliary-input OWFs, but his technique cannot be used for our quantum setting. For details, see \cref{sec:relatedworks}.} The hardness side follows from a recent tight
bound for quantum permutation inversion with quantum
advice~\cite{EC:ABCGY26} combined with Goldreich--Levin with quantum
advice~\cite{STOC:KhuTom24,STACS:AdcCle02}. In this overview we focus on the
non-existence side, where the main technical novelty of this paper
lies.

\paragraph{The non-auxiliary-input case is easy.}
First, let us consider the easy case, namely, ruling out \emph{standard} (not
auxiliary-input) primitives. 
In this case, we can follow the template of \cite{TCC:Wee06} which we have explained above.
Let us consider OWPuzzs. To break OWPuzzs, we have only to break
a distributional OWPuzz, $\Samp^{V^\pi,\operatorname{QBF}}$.
The only
input of $\Samp$ is $1^{\lambda}$, which is
independent of $\pi$. 
Hence $V^{\pi}$ can simply be
replaced by a trivial oracle $Z$ that always outputs 0. More precisely,
we can easily show
\begin{align}
\mathbb{E}_{\pi}
\SD\left(
\Samp^{V^{\pi},\operatorname{QBF}}(1^\lambda),
\Samp^{Z,\operatorname{QBF}}(1^\lambda)
\right)
\le\frac{1}{p_1(\secp)}
\end{align}
for any large polynomial $p_1$.
Take any polynomial $r$.
Then, taking 
$p_1(\lambda)\coloneqq r(\lambda)\lambda^2$,
and applying Markov's inequality,
\begin{align}
\Pr_{\pi}
\left[
\SD\left(
\Samp^{V^{\pi},\operatorname{QBF}}(1^\lambda),
\Samp^{Z,\operatorname{QBF}}(1^\lambda)
\right)
\ge\frac{1}{r(\secp)}
\right]
\le r(\secp)\times \frac{1}{p_1(\secp)}=\frac{1}{\secp^2}.
\end{align}
Because the right-hand side is summable,
from the Borel--Cantelli
lemma, 
with probability one over $\pi$, 
\begin{align}
\SD\left(
\Samp^{V^{\pi},\operatorname{QBF}}(1^\lambda),
\Samp^{Z,\operatorname{QBF}}(1^\lambda)
\right)
\le\frac{1}{r(\secp)}
\label{to1}
\end{align}
for all sufficiently large $\secp$.
Fix any such $\pi$.
We can easily construct a PPT algorithm $\cA$ querying QBF 
such that
\begin{align}
\SD
\left(
(\puzz,\cA^{\operatorname{QBF}}(1^\secp,\puzz))_{\puzz\gets\Samp^{Z,\operatorname{QBF}}(1^\secp)},
(\puzz,\ans)_{(\puzz,\ans)\gets\Samp^{Z,\operatorname{QBF}}(1^\secp)}
\right)
\le\negl(\secp).    
\label{to2}
\end{align}
Combining \cref{to1,to2} breaks the distributional OWPuzzs.

\paragraph{Why auxiliary inputs break this argument.}
The above argument does not work for auxiliary-input OWPuzzs, because of the following two reasons.

First, the auxiliary input $x$ can depend on $\pi$, and therefore $x$ could contain $(\ell,u,v)$ such that
$V^\pi(\ell,u,v)=1$. In other words, $x$ can ``tell'' $\Samp$ some positive points of $V^\pi$. 
Then we can no longer replace $V^\pi$ with the trivial oracle $Z$.
To solve the issue, we have to find a ``learning'' QPT algorithm $\cE^{V^\pi,\operatorname{QBF}}$
that takes $x$ as input, 
runs $\Samp^{V^\pi,\operatorname{QBF}}(x)$ several times,
and outputs a ``patch'' $\Gamma$, which is a set of polynomially many $(\ell,u,v)$,
such that 
\begin{align}
\mathbb{E}_{\Gamma}
\SD\left(\Samp^{V^\pi,\operatorname{QBF}}(x),
\Samp^{\Gamma,\operatorname{QBF}}(x)\right)
\le\frac{1}{\poly(|x|)},
\label{to3}
\end{align}
where $\Samp^{\Gamma,\operatorname{QBF}}$ is the same as $\Samp^{V^\pi,\operatorname{QBF}}$ except that
when the base algorithm queries $(\ell,u,v)$ to $V^\pi$, it queries the oracle $\Gamma$ that
returns 1 if $(\ell,u,v)\in\Gamma$ and 0 otherwise.

Second, to break the auxiliary-input security,
for a fixed $\pi$,
\cref{to3} has to be satisfied \emph{for every sufficiently long} $x$.

\paragraph{The patching lemma.}
Our main technical tool, the patching lemma (\cref{lem:uniform-public-input-erasure-complete}), overcomes both
obstructions. It states that, with probability one over $\pi$, for
every uniform QPT algorithm $\cA^{V^{\pi},\operatorname{QBF}}(x)$ which takes $x\in\bit^n$ as input and outputs a quantum state, 
and every accuracy polynomial $r$, 
there is a uniform QPT extractor $\cE$ that receives $x$ 
and outputs a polynomial-length patch $\Gamma$ such that,
for all sufficiently large $n$ and \emph{simultaneously for every}
$x\in\{0,1\}^{n}$,
\begin{align}
\mathbb{E}_{\Gamma}\left[
\TD\left(\cA^{V^{\pi},\operatorname{QBF}}(x),\cA^{\Gamma,\operatorname{QBF}}(x)\right)\right]
\le\frac{1}{r(n)}.
\label{to:patching}
\end{align}

The extractor $\cE$ addresses the first obstruction in the natural way. First,
it classically queries $V^\pi$ for all $(\ell,u,v)$ such that $\ell\le m$ with $m=O(\log n)$,
and constructs the complete truth table $\Gamma_0$ of $V^\pi$ for all ``short'' $(u,v)$. Because $m=O(\log n)$, this is possible in polynomial time.
Next, $\cE$ runs $\cA^{\Gamma_0,\operatorname{QBF}}$ up to the $i$th query with random $i\in[T]$, where $T$ is the number of queries
to $V^\pi$ made by $\cA$,
and measures the query register to get $(\ell,u,v)$. $\cE$ queries $(\ell,u,v)$ to $V^\pi$,
and if the result is 1, $(\ell,u,v)$ is added to $\Gamma_0$. In other words, $\Gamma_1\coloneqq \Gamma_0\cup \{(\ell,u,v)\}$.
$\cE$ repeats this $R$ times with an appropriate polynomial $R$, 
and gets the list $(\Gamma_0,\Gamma_1,...,\Gamma_R)$.
Finally $\cE$ outputs $\Gamma\coloneqq\Gamma_J$ with random $J\gets\{0,1,...,R-1\}$.
A BBBV-style hybrid~\cite{BBBV97} shows that
\begin{align}
\mathbb{E}_{\Gamma}\left[\mathrm{TD}\left(\mathcal{A}^{V^{\pi},\operatorname{QBF}}(x),\mathcal{A}^{\Gamma,\operatorname{QBF}}(x)\right)\right]
\le 2T\sqrt{\frac{\kappa_{\pi}(x)}{R}},
\label{to4}
\end{align}
where 
$\kappa_{\pi}(x)$ is the expected number of stages at which the extractor discovers a new positive triple, i.e., adds an element to the patch. 

The key observation is that each stage makes a discovery with probability at most $1/R$, no matter what has happened before: conditioned on the entire transcript of the extraction algorithm $\cE$, the unrevealed part of each long block $\pi_{\ell}$ with $\ell\ge m$ is a uniformly random perfect matching consistent with the previously observed answers, and hence any fixed fresh triple is a true positive (i.e., a triple $(\ell,u,v)$ such that $V^\pi(\ell,u,v)=1$) with probability at most $1/(2^{m}-R)\le 1/R$, where the parameters are chosen so that $2^{m}\ge 2R$. Hence, $\mathbb{E}_{\pi}[\kappa_{\pi}(x)]\le R\times\frac{1}{R}=1$.
Averaging over $\pi$ on both sides of \cref{to4}, we have
\begin{align}
\mathbb{E}_\pi\mathbb{E}_{\Gamma}\left[\mathrm{TD}\left(\mathcal{A}^{V^{\pi},\operatorname{QBF}}(x),\mathcal{A}^{\Gamma,\operatorname{QBF}}(x)\right)\right]
\le \mathbb{E}_\pi2T\sqrt{\frac{\kappa_{\pi}(x)}{R}}
\le \frac{2T}{\sqrt{R}}.
\label{to4_5}
\end{align}
Hence by Markov's inequality, we obtain that
\begin{align}
\mathbb{E}_{\Gamma}\left[\mathrm{TD}\left(\mathcal{A}^{V^{\pi},\operatorname{QBF}}(x),\mathcal{A}^{\Gamma,\operatorname{QBF}}(x)\right)\right]
\le \frac{1}{\poly(n)}
\label{ot5}
\end{align}
except with probability $\delta$, which is an inverse-polynomial, over the choice of $\pi$. 

This is, however, far from sufficient: the security of auxiliary-input primitives only requires an infinite set of hard auxiliary inputs, so to break them we must succeed, for one fixed $\pi$, on all but finitely many $x$. Hence a union bound over the $2^{n}$ inputs of length $n$ would require 
$\delta$ to be $2^{-\omega(n)}$ for each $x$, which Markov's inequality cannot provide.

\paragraph{Solution.}
The exponential moment saves us. Since every stage makes a discovery with probability at most $1/R$ regardless of the transcript, the number $K$ of discoveries during the $R$ stages satisfies
\begin{align}
\mathbb{E}\left[2^{K}\right]\le\left(1+\frac{1}{R}\right)^{R}\le e,
\end{align}
where the expectation is over $\pi$ and the internal randomness of the extraction algorithm $\cE$, and, crucially, the bound is uniform in $x$. By Jensen's inequality the same bound holds with $K$ replaced by $\kappa_{\pi}(x)$, and bounding the maximum by the log-sum-exp,
\begin{align}
\mathbb{E}_{\pi}\left[\max_{x\in\{0,1\}^{n}}\kappa_{\pi}(x)\right]
\le\log_{2}\sum_{x\in\{0,1\}^{n}}\mathbb{E}_{\pi}\left[2^{\kappa_{\pi}(x)}\right]
\le\log_{2}\left(2^{n}e\right)\le n+2.
\end{align}
In words, even an adversarially chosen input cannot force the extractor $\cE$ to make more than roughly $n$ discoveries on average: each discovery is a $1/R$-unlikely event, so accumulating many more than $n$ of them has probability far below $2^{-n}$, and this is what beats the union bound over the $2^{n}$ inputs. 

Choosing $R$ to be a sufficiently large polynomial in $n$, $T$, and the accuracy polynomial $r$, we obtain $2T\sqrt{(n+2)/R}\le 1/(2r(n)n^{2})$, so that 
$\mathbb{E}_\pi\mathbb{E}_{\Gamma}\left[\mathrm{TD}\left(\mathcal{A}^{V^{\pi},\operatorname{QBF}}(x),\mathcal{A}^{\Gamma,\operatorname{QBF}}(x)\right)\right]$
is bounded by $1/2r(n)n^2$ even for the worst $x$. 
Markov's inequality over $\pi$ and the Borel--Cantelli lemma yield the final result
that \cref{to:patching} is true with probability one over $\pi$ and for all $x$.

\paragraph{Ruling out EFI pairs.}
Breaking EFI pairs can be done in a similar way. However, instead of the QBF oracle, we have to use
a quantum oracle, which we call the Helstrom oracle, because
it is not known whether classical oracle is enough to break EFI pairs~\cite{STOC:LomMaWri24}. 
The Helstrom oracle receives classical descriptions of two quantum circuits $C_{0}$ and $C_{1}$ together with a challenge state, 
and the oracle applies to the challenge state the optimal two-outcome (Helstrom) measurement that distinguishes the outputs 
of $C_{0}$ and $C_{1}$. (Actually, we consider a unitary version of it so that our oracle is a unitary oracle.)
Because the Helstrom measurement for two states generated by
polynomial-size circuits can be implemented in quantum polynomial space up to
an exponentially small error, 
the same proof goes through when the Helstrom oracle is replaced
with a $\mathsf{unitaryPSPACE}$ oracle.

\subsection{Related Works}
\label{sec:relatedworks}
\paragraph{Comparison with Aaronson and Chen~\cite{CCC:AarChe17}.}
Aaronson and Chen~\cite{CCC:AarChe17} constructed a classical oracle relative to which
$\mathsf{SampBQP} = \mathsf{SampBPP}$ and yet the polynomial-time hierarchy is infinite.
This implies that in order to show 
\begin{align}
\mbox{Polynomial-time hierarchy is infinite}~~\Rightarrow~~\mathsf{SampBQP}\neq\mathsf{SampBPP},
\end{align}
which is a long-standing open problem in the field of quantum advantage,
a non-relativizing technique is required.
(Currently what is known is that the infiniteness of the polynomial-time hierarchy plus
some ad hoc assumptions 
imply $\mathsf{SampBQP}\neq\mathsf{SampBPP}$~\cite{STOC:AarArk11,BreMonShe16}, and to remove the latter is an open problem.)

Our \cref{thm:classicaloracle} is of the
same flavor: relative to a classical oracle there is no sampling-based quantum
advantage, while strong average-case hardness survives. However, the two
results strengthen the hardness side along incomparable axes. The hardness
in \cite{CCC:AarChe17} is a worst-case structural statement about the entire hierarchy,
whereas ours concerns a single language: it is average-case rather than
worst-case, it holds against QPT adversaries with polynomial-size quantum
advice rather than against classical algorithms, and the hard language lies
in $\mathsf{UP} \cap \mathsf{coUP}$, a subclass of $\mathsf{NP}$. In these three respects our
hardness statement is stronger. Moreover, our oracle rules out not only sampling-based quantum advantage but also
inefficient-verifier proofs of quantumness. On the other hand, since our hard language
lies in $\mathsf{UP} \cap \mathsf{coUP}$, our result says nothing about separating the
levels of the hierarchy, and we do not know whether the polynomial-time
hierarchy is infinite relative to our oracle; 
in this respect their result is stronger, making the two results incomparable overall.

\paragraph{Comparison with Wee~\cite{TCC:Wee06}.}
The second result of Wee~\cite{TCC:Wee06} is the 
construction of a relativized Pessiland that rules out auxiliary-input OWFs,
and our patching lemma (\cref{lem:uniform-public-input-erasure-complete}) is similar in spirit to the approximation lemma
at the core of his proof. Relative to an oracle consisting of the verification oracle of a
uniformly random function $f$ together with a $\mathsf{PSPACE}$ oracle, Wee showed that
auxiliary-input OWFs do not exist, as follows. Since an auxiliary-input OWF is computed by a
family of polynomial-size oracle circuits, it suffices to show that every oracle circuit
$\mathcal{C}$ of polynomial size admits a small set $Q$ of inputs to $f$, depending on both
$\mathcal{C}$ and $f$, such that hardwiring the values of $f$ on $Q$ into $\mathcal{C}$ and
answering every query outside $Q$ by $0$ yields an oracle-free circuit that agrees with the
original one on all but a small fraction of inputs; the oracle-free circuit can then be
inverted with the help of the $\mathsf{PSPACE}$ oracle, and the inversion carries over to the
original circuit by the agreement. The set $Q$ is constructed iteratively over the query positions: for each $i$,
one adds to $Q$ all ``heavy'' points for the $i$-th query, namely those points
$x$ such that the $i$-th oracle query is of the form $(x,\cdot)$ on a
noticeable fraction of the circuit's inputs. The central difficulty is the same as ours: once the oracle
is fixed, the approximation must hold \emph{simultaneously} for all polynomial-size circuits,
and a per-circuit failure probability of $2^{-n}$, which is all that a naive replacement of
the oracle by the all-zero oracle achieves, cannot survive a union bound over the
$2^{\mathrm{poly}(n)}$ circuits. Wee resolved this by showing, via the Hoeffding bound, that
the light queries cause noticeable disagreement with probability only $e^{-\Omega(sn^2)}$
over $f$, which beats the number $2^{O(sn\log(sn))}$ of circuits of size $s$. This is
precisely the role played in our proof by the exponential moment bound on the number of
discoveries, which yields a failure probability small enough to beat the union bound over the
$2^{n}$ auxiliary inputs of each length.

Beyond this analogy, however, his technique does not transfer to our setting, for three
reasons. First, auxiliary-input OWFs are secure against nonuniform adversaries, so the set
$Q$ and the values of $f$ on $Q$ merely need to \emph{exist}: they are hardwired into the
inverter as nonuniform advice. In contrast, the security of auxiliary-input OWPuzzs and EFI
pairs is defined with respect to uniform adversaries, together with an adversary-dependent
infinite set of hard auxiliary inputs, so we must exhibit a single uniform QPT extractor that
\emph{efficiently learns} the patch by interacting with the oracle. Second, the heaviness of
a query is defined in terms of the fraction of the circuit's classical inputs on which the
query takes a given value, a quantity that is no longer meaningful for algorithms that query
the oracle in superposition; our extractor instead runs the algorithm up to a randomly chosen
query, measures the query register, and tests the outcome against $V^{\pi}$, in the spirit of
the heavy-query learners of Impagliazzo and Rudich~\cite{STOC:ImpRud89} and Barak and
Mahmoody~\cite{C:BarMah09}, combined with a BBBV-style hybrid argument~\cite{BBBV97}. Third,
the probabilistic structure is different: the values of a uniformly random function are
independent, which is what permits a direct application of the Hoeffding bound, whereas our
oracle encodes uniformly random permutations, so we instead argue that, conditioned on the
entire transcript of the extraction, every fresh triple tested by the extractor is a true
positive with probability at most the reciprocal of the number of extraction stages, and lift
this martingale-type bound to the exponential moment estimate. For these reasons, we view
the patching lemma as a uniform, quantum analogue of Wee's approximation lemma rather than an
application of it.
\section{Preliminaries}

\subsection{Basic Notations}
We use the standard notation of quantum information and cryptography.
The security parameter is $\secp$.
$[n]$ is the set $\{1,2,...,n\}$.
For a set $S$, $x\gets S$ means that an element $x$ is sampled uniformly at random from $S$.
For a bit string $x$, $|x|$ is the length of $x$.
For an algorithm $\cA$, $y\gets \cA(x)$ means that the algorithm $\cA$ is run on input $x$ and outputs $y$.
For simplicity, we often denote $\cA(x)$ by the output of the algorithm $\cA$ on input $x$. (If $\cA$ is a probabilistic algorithm that outputs a classical bit string,
$\cA(x)$ is a classical distribution. If $\cA$ outputs a quantum state, $\cA(x)$ is a quantum state.)
$\TD$ is the trace distance.
$\SD$ is the statistical distance.
$\negl$ is a negligible function.

\subsection{Average-Case Decision and Search Hardness}
Throughout this subsection, we fix an oracle $\mathcal{O}$.
For a polynomial $S$, an $S(N)$-qubit oracle-dependent quantum
advice family is a family of density operators
$\left\{
\rho_{N,\mathcal{O}}
\right\}_{N\in\mathbb{N}},
$
where $\rho_{N,\mathcal{O}}$ is an $S(N)$-qubit state depending on
the security parameter $N$ and on the oracle $\mathcal{O}$.\footnote{When an oracle $\cO$ is fixed, the advice states can
of course depend on the oracle. Here, our intention is to write the oracle dependence explicitly for the later use.}

\begin{definition}[Strong average-case hardness of languages]
\label{def:strong-average-hardness}
Let $L^{\mathcal{O}}\subseteq\{0,1\}^{*}$ be a language relative to
a fixed oracle $\mathcal{O}$. 
We say that $L^{\mathcal{O}}$ is strongly hard on average against
uniform QPT algorithms that can query $\cO$ if, for every
uniform QPT oracle algorithm $\mathcal{A}$, there exists a negligible
function $\negl$ such that
\begin{align}
\Pr_{z\leftarrow\{0,1\}^{N}}
\left[
\mathcal{A}^{\mathcal{O}}(1^{N},z)
=L^{\mathcal{O}}(z)
\right]
\leq
\frac{1}{2}+\negl(N).
\label{eq:strong-average-hardness}
\end{align}
We say that $L^{\mathcal{O}}$ is strongly hard on average against
QPT algorithms that can query $\cO$ and can receive polynomial-size oracle-dependent quantum
advice if, for every polynomial $S$, every uniform QPT oracle
algorithm $\mathcal{A}$, and every $S(N)$-qubit oracle-dependent 
quantum advice family
$\{\rho_{N,\mathcal{O}}\}_{N\in\mathbb{N}}$, there exists a negligible
function $\negl$ such that 
\begin{align}
\Pr_{z\leftarrow\{0,1\}^{N}}
\left[
\mathcal{A}^{\mathcal{O}}
(1^{N},z,\rho_{N,\mathcal{O}})
=L^{\mathcal{O}}(z)
\right]
\leq
\frac{1}{2}+\negl(N).
\label{eq:strong-average-hardness-quantum-advice}
\end{align}
\end{definition}

\begin{remark}
In the above definition, an instance $z$ is sampled uniformly at random. We could define      
the average-case hardness in a more general way where an instance is sampled from some QPT or PPT samplers, but
in this paper, we can show the hardness with the uniform distribution. 
\end{remark}

\begin{definition}[Average-case hardness of total unique search]
\label{def:average-case-hard-tfup}
Let
$R^{\mathcal{O}}
\subseteq
\{0,1\}^{*}\times\{0,1\}^{*}$
be a relation relative to a fixed oracle $\mathcal{O}$.
The relation is polynomially balanced if there exists a polynomial
$p$ such that
$|w|\leq p(|x|)$
for all 
$(x,w)\in R^{\mathcal{O}}$.
We say that
$R^{\mathcal{O}}\in\operatorname{TFUP}^{\mathcal{O}}$
if the following three properties are satisfied:
\begin{enumerate}
\item 
$R^{\mathcal{O}}$ is polynomially balanced.
\item 
There exists a
classical deterministic polynomial-time oracle algorithm $\mathcal{V}$ such
that, for every $x$ and $w$,
$\mathcal{V}^{\mathcal{O}}(x,w)
=
R^{\mathcal{O}}(x,w)$.
\item
For every instance $x$,
$|
\{
w\in\{0,1\}^{*}
:
R^{\mathcal{O}}(x,w)=1
\}
|
=
1.
$
In other words, every instance has exactly one polynomial-length valid solution.
\end{enumerate}
We say that $R^{\mathcal{O}}$ is average-case hard 
against uniform QPT algorithms that can query $\cO$
if, for every uniform QPT oracle algorithm $\mathcal{A}$,
there exists a negligible function $\negl$ such that
\begin{align}
\Pr
\left[
R^{\mathcal{O}}(x,\widehat{w})=1
:
x\leftarrow\bit^N,
\widehat{w}\leftarrow
\mathcal{A}^{\mathcal{O}}(1^{N},x)
\right]
\leq
\negl(N).
\label{eq:average-case-search-hardness}
\end{align}
We say that $R^{\mathcal{O}}$ is average-case hard 
against QPT algorithms that can query $\cO$ and can receive polynomial-size
oracle-dependent quantum advice if, for every polynomial $S$, every
uniform QPT oracle algorithm $\mathcal{A}$, and every $S(N)$-qubit
oracle-dependent quantum advice family
$\{\rho_{N,\mathcal{O}}\}_{N\in\mathbb{N}}$, there exists a negligible
function $\negl$ such that
\begin{align}
\Pr
\left[
R^{\mathcal{O}}(x,\widehat{w})=1
:
x\leftarrow\bit^N,
\widehat{w}\leftarrow
\mathcal{A}^{\mathcal{O}}
(1^{N},x,\rho_{N,\mathcal{O}})
\right]
\leq
\negl(N).
\label{eq:average-case-search-hardness-quantum-advice}
\end{align}
\end{definition}

\subsection{Auxiliary-Input Primitives}
\begin{definition}[Auxiliary-input OWPuzzs]
\label{def:auxiliary-input-owpuzz}
Fix an oracle $\mathcal{O}$.
An auxiliary-input 
one-way puzzle 
(OWPuzz)
relative to $\mathcal{O}$ is a pair
$(\Samp,\Ver)$ of oracle algorithms with the following
syntax.
\begin{enumerate}
    \item
    $\Samp^{\mathcal{O}}(x)\to (\mathsf{puzz},\mathsf{ans})$:
    It is a uniform QPT oracle
    algorithm that, on input a classical auxiliary input
    $x\in\{0,1\}^{*}$, outputs a pair of classical strings $(\puzz,\ans)$.
    \item
    $\Ver^{\mathcal{O}}(x,\mathsf{puzz},\mathsf{ans}')\to\top/\bot$:
    It is a not-necessarily-efficient oracle algorithm
    that halts on every input and outputs $\top/\bot$.
\end{enumerate}
We require the following
properties.

\paragraph{Correctness.}
There exists a negligible function $\negl$ such that,
for every $x\in\{0,1\}^{*}$,
\begin{align}
\Pr\left[
\top\gets
\Ver^{\mathcal{O}}(x,\mathsf{puzz},\mathsf{ans})
:
(\mathsf{puzz},\mathsf{ans})
\leftarrow
\Samp^{\mathcal{O}}(x)
\right]
\geq
1-\negl(|x|).
\label{eq:auxiliary-input-owpuzz-correctness}
\end{align}

\paragraph{Security.}
For every uniform QPT oracle adversary
$\mathcal{A}$, there exist an infinite set
$I_{\mathcal{A}}
\subseteq
\{0,1\}^{*}$
and a negligible function $\negl$ such that, for every
$x\in I_{\mathcal{A}}$,
\begin{align}
\Pr\left[
\top\gets\Ver^{\mathcal{O}}
(x,\mathsf{puzz},\widehat{\mathsf{ans}})
:
\begin{array}{r}
(\mathsf{puzz},\mathsf{ans})
\leftarrow
\Samp^{\mathcal{O}}(x)
\\
\widehat{\mathsf{ans}}
\leftarrow
\mathcal{A}^{\mathcal{O}}
(x,\mathsf{puzz})
\end{array}
\right]
\leq
\negl(|x|).
\label{eq:auxiliary-input-owpuzz-security}
\end{align}

An AI-OWPuzz is called classically secure if
the security condition is required only for uniform PPT
oracle adversaries $\mathcal{A}$.
\end{definition}

\begin{definition}[Auxiliary-input distributional OWPuzzs]\label{def:aux-dist-owpuzz}
Fix an oracle $\cO$. An auxiliary-input distributional OWPuzz
relative to $\cO$ is a uniform QPT oracle algorithm
$\mathsf{Samp}^{\cO}(x)\to(\mathsf{puzz},\mathsf{ans})$ that, on input a
classical auxiliary input $x\in\{0,1\}^{*}$, outputs a pair of classical
strings. 
We require the following security.
\paragraph{\bf Security.}
There exists a polynomial $p$ such that, for every uniform QPT oracle
adversary $\mathcal{A}$, there exists an infinite set
$I_{\mathcal{A}}\subseteq\{0,1\}^{*}$ such that, for every
$x\in I_{\mathcal{A}}$,
\begin{align}\label{eq:dist-security}
\mathrm{SD}
\left(
(\mathsf{puzz},\mathcal{A}^{\cO}(x,\mathsf{puzz}))_{
\mathsf{puzz}\gets\mathsf{Samp}^{\cO}(x)}
,
(\puzz,\ans)_{
(\puzz,\ans)\gets\mathsf{Samp}^{\cO}(x)}
\right)
\ge\frac{1}{p(|x|)}.
\end{align}

An auxiliary-input distributional OWPuzz is called \emph{classically secure}
if the security condition is required only for uniform PPT oracle
adversaries $\mathcal{A}$.
\end{definition}

The following lemma is straightforwardly obtained from the definitions.
\begin{lemma}[OWPuzzs imply distributional OWPuzzs, auxiliary-input version]
\label{lem:owpuzz-implies-dist}
Fix an oracle $\cO$. If $(\mathsf{Samp},\mathsf{Ver})$ is an auxiliary-input
(classically-secure) OWPuzz relative to $\cO$
then $\mathsf{Samp}$ is an auxiliary-input
(classically-secure) distributional OWPuzz relative to $\cO$. 
\end{lemma}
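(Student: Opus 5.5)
We argue by contraposition: assuming $\Samp$ is \emph{not} an auxiliary-input distributional OWPuzz, we show that $(\Samp,\Ver)$ is not an auxiliary-input OWPuzz. Fix the polynomial $p(n)\coloneqq 2$; for correctness we could take any polynomial, and the constant $2$ already suffices. If the distributional security fails for this $p$, there is a uniform QPT (respectively PPT) adversary $\cA$ such that every infinite set $I$ contains some $x$ violating \cref{eq:dist-security}. Equivalently, the set of $x$ on which
\begin{align}
\mathrm{SD}\left((\puzz,\cA^{\cO}(x,\puzz))_{\puzz\gets\Samp^{\cO}(x)},(\puzz,\ans)_{(\puzz,\ans)\gets\Samp^{\cO}(x)}\right)\ge\frac{1}{2}
\end{align}
is finite. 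Hence for all but finitely many $x$ this statistical distance is less than $1/2$.

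We use the same $\cA$ as an adversary against the OWPuzz. Consider the event $E$ that $\Ver^{\cO}(x,\cdot,\cdot)$ outputs $\top$. $\Ver$ is a fixed (possibly inefficient) randomized procedure applied to the pair, so the processing inequality for statistical distance applies. The success probability on the pair $(\puzz,\cA^{\cO}(x,\puzz))$ therefore differs from the success probability on the honest pair $(\puzz,\ans)$ by at most $1/2$. By correctness, the honest success probability is at least $1-\negl(|x|)$. Consequently, for all but finitely many $x$, $\cA$ succeeds with probability at least $1/2-\negl(|x|)$, which is larger than $1/4$ for sufficiently long $x$.

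Now suppose the OWPuzz security held for $\cA$. Then there would be an infinite set $I_\cA$ and a negligible function such that the success probability is at most $\negl(|x|)$ for every $x\in I_\cA$. Any infinite set of strings contains strings of unbounded length, so $I_\cA$ contains some $x$ that is both outside the finite exceptional set and long enough that $\negl(|x|)<1/4$. On that $x$ the two bounds contradict each other. Hence OWPuzz security fails, which proves the contrapositive.

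The same argument applies verbatim to the classically secure variant, because $\cA$ is PPT whenever the assumed adversary is PPT. The only point requiring care is the quantifier order. The distributional definition fixes $p$ before the adversary is chosen, and we picked $p$ as a constant independent of $\cA$. The ``infinite set'' quantifiers then negate correctly into ``all but finitely many $x$''.
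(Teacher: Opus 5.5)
Your proof is correct, and it is exactly the straightforward definitional argument the paper has in mind; the paper gives no written proof. Data processing through $\Ver$ plus correctness turns OWPuzz security on $I_\cA$ into $\SD\ge 1-\negl(|x|)\ge 1/2$ for all sufficiently long $x\in I_\cA$, and your contrapositive is this argument stated in negated form.

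One small nit: your aside that any polynomial would do is not quite right. For example, $p\equiv 1$ would demand $\SD\ge 1$ exactly, which does not follow. Your actual choice $p\equiv 2$ is fine.
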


\begin{definition}[Auxiliary-input EFI pairs]\label{def:AIEFI}
Fix an oracle $\cO$. An auxiliary-input EFI pair (generator) relative to $\cO$ is a
uniform QPT oracle algorithm $\mathsf{Gen}$ that, on input a classical auxiliary
input $x\in\{0,1\}^{*}$ and a bit $b\in\{0,1\}$, outputs a state
$\xi_{b,x}\leftarrow\mathsf{Gen}^{\cO}(x,b)$. We require that there exists a positive
polynomial $p$ such that, for every uniform QPT oracle distinguisher $\cD$, there
exists an infinite set $S_{\cD}\subseteq\{0,1\}^{*}$ satisfying the following.

\textbf{Statistical farness.} For every $x\in S_{\cD}$,
\begin{align}
\mathrm{TD}\left(\xi_{0,x},\xi_{1,x}\right)\ge\frac{1}{p(|x|)}.
\label{eq:EFIfar}
\end{align}


\textbf{Computational indistinguishability.} 
There exists a negligible function $\negl$ such that for all $x\in S_{\cD}$,
\begin{align}
\left|\Pr\left[1\leftarrow \cD^{\cO}(x,\xi_{0,x})\right]
-\Pr\left[1\leftarrow \cD^{\cO}(x,\xi_{1,x})\right]\right|\le\negl(|x|).
\label{eq:EFIind}
\end{align}
\end{definition}

\begin{remark}
Auxiliary-input EFI pairs were defined in \cite{ITCS:BCQ23}. Their definition is
stated in a different way from the above one, 
but is actually equivalent to the above one.
(Their infinite set $S$ is independent of the distinguisher, and they require
that for every polynomial $q$ there exists $x\in S$ such that the advantage is at most $1/q(|x|)$.) 
\end{remark}

\subsection{Sampling Complexity}

\begin{definition}[Sampling Problems~\cite{Aar14,ITCS:ABK24}]
\label{def:Samplingproblems}
A (polynomially-bounded) sampling problem $S$ is a collection of probability distributions $\{D_x\}_{x\in\bit^*}$, 
where $D_x$ is a distribution over $\bit^{p(|x|)}$, for some fixed polynomial $p$.
\end{definition}

\begin{definition}[{\bf SampBPP} and {\bf SampBQP}~\cite{Aar14,ITCS:ABK24}] 
\label{def:SampBQP}
{\sf SampBPP} is the class of (polynomially-bounded) sampling problems 
$S=\{D_x\}_{x\in\bit^*}$ for which there exists a PPT algorithm $\cB$ such that for all $x$ and all $\epsilon>0$, 
\begin{align}
\SD(\cB(x,1^{\lfloor 1/\epsilon \rfloor}),D_x) \le\epsilon, 
\end{align}
where $\cB(x,1^{\lfloor 1/\epsilon\rfloor})$ is the output probability distribution of 
$\cB$ on input $(x, 1^{\lfloor 1/\epsilon\rfloor})$. 
{\sf SampBQP} is defined
the same way, except that $\cB$ is a QPT algorithm rather than a PPT one.
\end{definition}

\section{Oracle}
\label{subsec:oracle-construction-complete}

For every $\ell\in\mathbb{N}$, let
$\pi_{\ell}$ be a permutation over $\bit^\ell$ chosen uniformly at random
and write
$\pi
:=
\left(
\pi_{1},
\pi_{2},
\ldots
\right)$.
For a fixed realization
$\pi$, define the oracle $V^\pi$ by
\begin{align}
V^{\pi}(\ell,u,v)
\coloneqq
\begin{cases}
1,
&
|u|=|v|=\ell
\text{ and }
\pi_{\ell}(u)=v,
\\
0,
&
\text{otherwise}.
\end{cases}
\label{eq:graph-verification-oracle-complete}
\end{align}

We use $\operatorname{QBF}$ as a fixed PSPACE-complete Boolean oracle.
We also use the following quantum oracle, which is fixed independently
of $\pi$.

\begin{definition}[Helstrom oracle]
\label{def:stratified-helstrom-complete} 
We define a set
$\mathcal{H}
=
(
\mathcal{H}^{(0)},
\mathcal{H}^{(1)},
\mathcal{H}^{(2)},
\ldots
)
$
of quantum oracles as follows.
The level zero oracle $\cH^{(0)}$ acts as the identity.
A query to the level-$(j+1)$ oracle $\cH^{(j+1)}$ contains $1^{j+1}$, classical descriptions of two
quantum circuits $C_0$ and $C_1$, a classical string $z$, a state $|\psi\rangle$, and a one-qubit answer register.  
The circuits $C_0$ and
$C_1$ may query levels
$\mathcal{H}^{(0)},\ldots,\mathcal{H}^{(j)}$, but they may not query
$V^{\pi}$ or any other oracle depending on $\pi$.
Let $\sigma_{b,z}$ be the output density operator of $C_b(z)$, and let
$M_{C_0,C_1,z}$
be the projector onto the positive eigenspace
of 
$\sigma_{1,z}-\sigma_{0,z}$.
On the state $|\psi\rangle$ and the answer
register, the oracle applies
\begin{align}
U_{C_0,C_1,z}
:=
\left(
I-M_{C_0,C_1,z}
\right)
\otimes I
+
M_{C_0,C_1,z}
\otimes X.
\label{eq:stratified-helstrom-unitary-complete}
\end{align}
Invalid descriptions act as the identity.  
This defines $\mathcal{H}$ inductively because a query to the level-$(j+1)$ oracle $\cH^{(j+1)}$
refers only to already-defined lower levels.
Note that the unary level encoding implies that a polynomial-time algorithm on
$n$-bit inputs can access only levels $j=\operatorname{poly}(n)$.
\end{definition}

\section{Average-Case Hard Problems}
\label{subsec:quantum-advice-inversion-complete}

We use the following theorem.

\begin{theorem}[Theorem~1 of \cite{EC:ABCGY26}]
\label{thm:ABCGY26}
Let $\pi$ be a uniformly random permutation over $[M]$.
A quantum algorithm $\cA$ 
can make at most $T$ forward queries to $\pi$ and can make queries to a side oracle $w$, which is independent of $\pi$.
$\cA$ receives a uniformly random $y\gets[M]$ as the challenge input. 
$\cA$ can also receive
an $S$-qubit quantum advice state $\rho_{\pi,\cA,w}$ that may
depend arbitrarily on the permutation $\pi$,  
on the description of $\cA$, 
and on the side oracle $w$,
but it must be independent of the challenge $y$.
Finally, $\cA$ outputs $x$. 
Then 
\begin{align}
\mathbb{E}_{\pi}
\Pr
\left[
\pi(x)=y
:
y\gets[M],
x\gets\cA^{\pi,w}(y,\rho_{\pi,\cA,w})
\right]
\leq
C
\left(
\frac{ST}{M}
+
\frac{T^2}{M}
\right),
\label{eq:external-permutation-inversion-bound-complete}
\end{align}
where $C$ is a universal constant.
\end{theorem}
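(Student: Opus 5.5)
Since \cref{thm:ABCGY26} is imported from \cite{EC:ABCGY26}, we only indicate how it can be proved; the paper uses it as a black box. The plan is the standard two-step route for oracle problems with quantum advice: first reduce the advice game to a \emph{multi-instance} game without advice, then prove a multi-instance bound for permutation inversion. We first dispose of the side oracle $w$. Since $w$ is independent of $\pi$, it suffices to prove the bound for every fixed $w$ and then average over $w$. For fixed $w$, queries to $w$ are $\pi$-independent unitaries and can be absorbed into the algorithm, so they cost nothing in the query count $T$. From now on $w$ is fixed and suppressed.

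\textbf{Step 1 (advice removal).} Suppose $\cA$ with $S$-qubit advice $\rho_\pi$ achieves advantage $\varepsilon=\mathbb{E}_\pi\Pr[\pi(x)=y]$. We build an advice-free algorithm $\cB$ for the $k$-instance game. In this game, $\cB$ receives independent challenges $y_1,\dots,y_k$ one at a time, must invert each one, and may make $O(kT)$ queries in total. We take $k=\Theta(S)$. $\cB$ replaces $\rho_\pi$ by the maximally mixed state $I/2^S$, which contains $\rho_\pi$ with weight $2^{-S}$. It then runs $\cA$ on $y_1,\dots,y_k$ sequentially, reusing the advice register. After each run it restores the register using a Marriott--Watrous style alternating-projection or gentle rewinding procedure, as in the quantum-advice analyses of Chung--Guo--Liu--Qian and Liu.

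The target statement is
\begin{align}
\Pr[\cB \text{ inverts all } k \text{ challenges}] \ge 2^{-S}\,(\varepsilon/2)^{k}.
\end{align}
Proving this is the technically delicate part. The average-over-$\pi$ advantage must be converted into a per-run guarantee that survives repeated measurement of the same advice. We would first try the ``$k$-fold success implies one-shot'' lemma: consider the projector onto advice states that succeed on a fresh random challenge, and note that its top eigenvalue governs the repeated success probability.

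\textbf{Step 2 (multi-instance hardness without advice).} We then show that any advice-free algorithm making $kT$ total queries inverts all $k$ independent uniform challenges with probability at most
\begin{align}
\left(C'\left(\frac{kT}{M}+\frac{T^2}{M}\right)\right)^{k}.
\end{align}
We would prove this with a compressed-permutation or ``database'' argument in Zhandry's style, adapted to permutations via a random-matching representation. Alternatively we would use a hybrid that programs the $i$th preimage. Conditioned on the first $i-1$ successes, the $i$th success probability is bounded by the single-instance bound $O(T^2/M)$, plus $O(kT/M)$ to account for knowledge about $\pi$ gathered during earlier stages. The main obstacle, which we expect to be the heart of the argument, is the permutation structure: values of $\pi$ are not independent, so Zhandry's compressed oracle does not apply directly. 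We would handle this either with a permutation-specific recording technique or by first switching to a random function with a collision-based $O((kT)^2/M)$ penalty. The latter seems too lossy for the $T^2/M$ term, so we expect a genuine permutation argument to be needed.

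\textbf{Step 3 (combine).} We compare the two bounds:
\begin{align}
2^{-S}(\varepsilon/2)^{k}\le \left(C'\left(\frac{kT}{M}+\frac{T^2}{M}\right)\right)^{k}.
\end{align}
Taking $k$-th roots with $k=S$ gives $\varepsilon\le 4C'\,(ST/M+T^2/M)$. This yields the claimed inequality with a universal constant $C$, and the result holds for every fixed $w$, hence on average over $w$.
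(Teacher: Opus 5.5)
The paper does not prove this bound. It imports Theorem~1 of \cite{EC:ABCGY26} and only remarks that a $\pi$-independent side oracle $w$ (and its inverse) can be absorbed into the arbitrary unitaries between $\pi$-queries, with the advice's dependence on $w$ subsumed by its dependence on $\cA$. Your opening reduction is exactly this, and together with the citation it is all the statement requires.

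Read as a self-contained proof, however, your outline leaves both load-bearing inequalities unproved, as you partly acknowledge. In Step~1, the lemma you propose does not give $2^{-S}(\varepsilon/2)^k$ with $O(kT)$ queries.
\begin{itemize}
\item The relevant object is not a projector. It is the PSD operator $P=\mathbb{E}_yP_y$ on the advice register, where $P_y=\langle 0|U_y^{\dagger}\Pi_yU_y|0\rangle$ is the success operator for challenge $y$ (workspace uncomputed and checked).
\item Because the $P_y$ do not commute, $k$ sequential runs succeed with probability $\mathbb{E}\|P_{y_k}\cdots P_{y_1}\psi\|^2$, not $\langle\psi|P^k|\psi\rangle$.
\item Cauchy--Schwarz and independence of the $y_i$ give only $\mathbb{E}\|P_{y_k}\cdots P_{y_1}\psi\|^2\ge|\langle\psi|P^k|\psi\rangle|^2=\lambda_{\max}(P)^{2k}$. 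Even granting Step~2, with $k=S$ this yields only $\varepsilon=O\bigl(\sqrt{(ST+T^2)/M}\bigr)$, a square-root loss.
\item Any Marriott--Watrous-style repair of the advice re-runs $\cA$. Each round costs $\Theta(T)$ queries, and the number of rounds is not $O(1)$ in general. So the $O(kT)$ budget you feed into Step~2 is unjustified.
\end{itemize}
This is the kind of loss that made earlier quantum-advice bounds (Chung--Guo--Liu--Qian) non-tight. Removing it needs a genuinely new ingredient, such as Liu's refined multi-instance framework or the techniques of \cite{EC:ABCGY26}.

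In Step~2, the bound $\bigl(C'(kT+T^2)/M\bigr)^k$ is asserted, not derived.
\begin{itemize}
\item The ``condition on the first $i-1$ successes'' hybrid is a classical lazy-sampling argument. There, $\pi$ is uniform over permutations consistent with a transcript, and a fresh $y_i$ hits a revealed value with probability at most $(i-1)T/M$.
\item Quantumly there is no transcript of revealed values. After post-selecting on earlier successes, the state can carry global information about $\pi$ that is not controlled by counting queried points.
\item The fallback of switching to a random function fails outright, not merely in the $T^2/M$ term. The switching error is additive (and quantumly of order $(kT)^3/M$, not $(kT)^2/M$), so it cannot be absorbed into a target that decays exponentially in $k$.
\end{itemize}
Your Step~3 arithmetic is fine given Steps~1 and~2, but those two steps are exactly the content of \cite{EC:ABCGY26}. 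For this paper the statement should simply be taken from the citation, together with your side-oracle reduction.
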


\begin{remark}
Theorem~1 of \cite{EC:ABCGY26} is stated without the side oracle $w$.
Granting access to a side oracle $w$ whose action is fixed independently of $\pi$
is without loss of generality: the algorithm is constrained only in the number of
queries to $\pi$ and may apply arbitrary unitaries between the queries, so the
action of $w$ (and of its inverse) on the relevant registers can be absorbed into
these unitaries, and the dependence of the advice state on $w$ is subsumed by its
dependence on the description of $\cA$. We nevertheless state the side oracle
explicitly for our later uses.
\end{remark}

From \cref{thm:ABCGY26}, we obtain the following corollary.

\begin{corollary}
\label{cor:fixed-oracle-inversion-hardness}
Let
$\pi=(\pi_{1},\pi_{2},\ldots)$,
where, independently for every $\ell\in\mathbb{N}$,
$\pi_{\ell}$ is chosen uniformly from the permutations of
$\{0,1\}^{\ell}$.
Fix any collection $\mathcal{B}$ of classical or quantum side oracles
whose action is fixed independently of $\pi$.
Then, with probability one over the choice of $\pi$, the
following statement holds.

For every uniform QPT oracle algorithm $\mathcal{A}$, every pair of
polynomials $S$ and $T$ such that $\mathcal{A}$ makes at most $T(n)$
queries to $V^{\pi}$ on inputs of length $n$, and every family of
$S(n)$-qubit advice states
$\{\rho_{n,\pi,\mathcal{A},\mathcal{B}}\}_{n\in\mathbb{N}}$, there
exists a negligible function $\negl$ such that
\begin{align}
\Pr
\left[
 x=\pi_{n}^{-1}(y)
:
 y\leftarrow\{0,1\}^{n},
x\leftarrow
 \mathcal{A}^{V^{\pi},\mathcal{B}}
 \left(1^{n},y,\rho_{n,\pi,\mathcal{A},\mathcal{B}}\right)
\right]
\leq
\negl(n).
\label{eq:fixed-oracle-inversion-hardness}
\end{align}
The advice state may depend arbitrarily on the entire fixed
permutation family $\pi$, on the description of $\mathcal{A}$, and on
$\mathcal{B}$, but it must be independent of
$y$.
\end{corollary}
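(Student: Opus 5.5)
Our plan is to derive the corollary from \cref{thm:ABCGY26} by a Markov argument over $\pi$, a Borel--Cantelli step, and a countable union over algorithms and polynomials. Fix a uniform QPT algorithm $\mathcal{A}$ and polynomials $S,T$; there are only countably many such triples, so it suffices to show that for each triple the statement holds with probability one over $\pi$. Fix also $n$. A query to $V^{\pi}$ with $\ell\neq n$, or with malformed input, depends only on $\pi_\ell$ for $\ell\ne n$, which is independent of $\pi_n$. We therefore treat $(\pi_\ell)_{\ell\neq n}$ together with $\mathcal{B}$ as a side oracle $w$ independent of $\pi_n$; conditioning on $(\pi_\ell)_{\ell\ne n}$ keeps $\pi_n$ uniform.

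Next we reduce queries to $V^\pi$ at level $n$ to forward queries to $\pi_n$. A query $|u,v,b\rangle\mapsto|u,v,b\oplus[\pi_n(u)=v]\rangle$ is implemented with two forward queries: compute $\pi_n(u)$ into an ancilla, compare it with $v$, and uncompute. Thus $\mathcal{A}$ makes at most $2T(n)$ forward queries to $\pi_n$ and uses the side oracle $w$. The advice $\rho_{n,\pi,\mathcal{A},\mathcal{B}}$ may depend on all of $\pi$, which is allowed, since the advice in \cref{thm:ABCGY26} may depend arbitrarily on $\pi_n$ and $w$. Applying the theorem conditionally with $M=2^n$, the worst-case advice yields
\begin{align}
\mathbb{E}_{\pi}\,\varepsilon_n(\pi)\le C\,\frac{2S(n)T(n)+4T(n)^2}{2^n}=:\eta_n,
\end{align}
where $\varepsilon_n(\pi)$ is the supremum over $S(n)$-qubit advice states of the success probability. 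Taking the supremum inside is legitimate because the theorem allows advice chosen as an arbitrary function of $\pi$, and in particular the optimal one.

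Finally, Markov's inequality gives $\Pr_\pi[\varepsilon_n(\pi)\ge 2^{-n/2}]\le 2^{n/2}\eta_n$, which is summable in $n$. By Borel--Cantelli, with probability one, $\varepsilon_n(\pi)<2^{-n/2}$ for all sufficiently large $n$. This is negligible, and it holds uniformly for every advice family. A countable intersection over the triples $(\mathcal{A},S,T)$ then gives the probability-one statement.

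The main obstacle is the order of quantifiers for the advice. Because the advice may depend on the fixed $\pi$, we must bound the supremum over advice before taking the expectation over $\pi$, which is exactly why we apply the theorem with $\pi$-dependent optimal advice. A second point needing care is that the side oracle is really the other permutation levels, so the theorem must be applied conditionally on $(\pi_\ell)_{\ell\neq n}$.
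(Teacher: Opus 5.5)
Your proposal is correct and follows essentially the same route as the paper. You condition on $\pi_{\neq n}$, simulate each $V^{\pi}$ query with two forward queries to $\pi_n$, apply \cref{thm:ABCGY26} to the success probability maximized over advice, and finish with Markov, Borel--Cantelli, and a countable intersection over $(\mathcal{A},S,T)$. The only difference is your Markov threshold $2^{-n/2}$ in place of the paper's $2^{-n/4}$, which is immaterial since both give a summable bound.
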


\begin{proof}
Fix a uniform QPT oracle algorithm $\mathcal{A}$ and polynomials $S$
and $T$ as in the statement.  We first prove a probability-one
statement for this fixed triple $(\mathcal{A},S,T)$ and then take a
countable intersection over all such triples.

Fix an input length $n$.  
Let
$\pi_{\neq n}\coloneqq(\pi_{\ell})_{\ell\neq n}$ denote a fixed realization of all
non-target permutation blocks, where each
$\pi_{\ell}$ is a permutation of $\{0,1\}^{\ell}$.
For an $S(n)$-qubit state $\rho$, define
\begin{align}
\operatorname{Succ}_{n}^{\mathcal{A},\mathcal{B}}
\left(
\pi_n,\pi_{\neq n};\rho
\right)
\coloneqq
\Pr
\left[
 x=\pi_n^{-1}(y)
 :
 y\leftarrow\{0,1\}^{n},
 x\leftarrow
 \mathcal{A}^{V^{\pi},\mathcal{B}}
 \left(1^{n},y,\rho\right)
\right].
\label{eq:conditioned-inversion-success-probability}
\end{align}
Define the best success probability obtainable from an $S(n)$-qubit
advice state by
\begin{align}
\operatorname{Opt}_{n}^{\mathcal{A},S,\mathcal{B}}
\left(
\pi_n,\pi_{\neq n}
\right)
\coloneqq
\max_{\rho
}
\operatorname{Succ}_{n}^{\mathcal{A},\mathcal{B}}
\left(
\pi_n,\pi_{\neq n};\rho
\right).
\label{eq:conditioned-optimal-inversion-success}
\end{align}

We next bound the expectation of
Equation~\eqref{eq:conditioned-optimal-inversion-success} over a
uniformly random $\pi_n$, while keeping $\pi_{\neq n}$ fixed.
A query to 
$V^{\pi}$ can be simulated by
at most two forward queries to $\pi_n$.
Applying \cref{thm:ABCGY26},
for every fixed
$\pi_{\neq n}$,
\begin{align}
\mathbb{E}_{\pi_n}
\left[
\operatorname{Opt}_{n}^{\mathcal{A},S,\mathcal{B}}
\left(
\pi_n,\pi_{\neq n}
\right)
\right]
\leq
C
\left(
\frac{2S(n)T(n)}{2^{n}}
+
\frac{4T(n)^{2}}{2^{n}}
\right).
\label{eq:conditioned-optimal-success-bound}
\end{align} 
Here, we have used the fact that a query to $V^\pi$ can be simulated by querying $\pi$ twice.
Hence the tower
property of conditional expectation gives
\begin{align}
\mathbb{E}_{\pi}
\left[
\operatorname{Opt}_{n}^{\mathcal{A},S,\mathcal{B}}
\left(
\pi_n,
\pi_{\neq n}
\right)
\right]
=
\mathbb{E}_{\pi_{\neq n}}
\left[
\mathbb{E}_{\pi_n}
\left[
\operatorname{Opt}_{n}^{\mathcal{A},S,\mathcal{B}}
\left(
\pi_n,
\pi_{\neq n}
\right)
\middle|
\pi_{\neq n}
\right]
\right]
\leq
C
\left(
\frac{2S(n)T(n)}{2^{n}}
+
\frac{4T(n)^{2}}{2^{n}}
\right).
\label{eq:unconditioned-optimal-success-bound}
\end{align}
Set
$\eta(n)
:=
2^{-n/4}$
and define the bad event
\begin{align}
\mathsf{Bad}_{n}^{\mathcal{A},S,T,\mathcal{B}}
:=
\left\{
\pi:
\operatorname{Opt}_{n}^{\mathcal{A},S,\mathcal{B}}
\left(
\pi_{n},
\pi_{\neq n}
\right)
>
\eta(n)
\right\}.
\label{eq:bad-event-in-corollary}
\end{align}
By Markov's inequality and
Equation~\eqref{eq:unconditioned-optimal-success-bound},
\begin{align}
\Pr_{\pi}
\left[
\mathsf{Bad}_{n}^{\mathcal{A},S,T,\mathcal{B}}
\right]
\leq
C
\left(
2S(n)T(n)
+
4T(n)^{2}
\right)
2^{-3n/4}.
\label{eq:markov-bound-in-corollary}
\end{align}
Since $S$ and $T$ are polynomials, the right-hand side of
Equation~\eqref{eq:markov-bound-in-corollary} is summable in $n$.
The first Borel--Cantelli lemma therefore implies that, for the fixed
triple $(\mathcal{A},S,T)$, with probability one over
$\pi$ only finitely many of the events
$\mathsf{Bad}_{n}^{\mathcal{A},S,T,\mathcal{B}}$ occur.
Equivalently, with probability one over $\pi$ there exists an integer
$n_{0}=n_{0}(\mathcal{A},S,T,\mathcal{B},\pi)$ such that
\begin{align}
\operatorname{Opt}_{n}^{\mathcal{A},S,\mathcal{B}}
\left(
\pi_n,
\pi_{\neq n}
\right)
\leq
2^{-n/4}
\label{eq:eventual-optimal-success-bound}
\end{align}
for every $n\geq n_{0}$.

There are only countably many uniform oracle machines and only
countably many polynomial bounds with finite descriptions.  Taking the countable intersection of these events of probability one
produces a set of complete permutation families that also has
probability one and for which
Equation~\eqref{eq:eventual-optimal-success-bound} holds
simultaneously for every triple $(\mathcal{A},S,T)$.
\end{proof}

\begin{lemma}[Goldreich-Levin with quantum advice~\cite{STOC:KhuTom24,STACS:AdcCle02}]
\label{lem:KT}
There exists a QPT algorithm $\mathcal{E}$ with the following property.
For every $n\in\mathbb{N}$, every QPT algorithm $\mathcal{D}$ that
outputs a bit, every mixed state $\rho$, every string
$a\in\{0,1\}^{n}$, and every $\varepsilon>0$, if
\begin{align}
\Pr_{r\leftarrow\{0,1\}^{n}}
\left[
\mathcal{D}(\rho,r)
=
\langle a,r\rangle
\right]
&\geq
\frac{1}{2}+\varepsilon,
\label{eq:kt-gl-prediction}
\end{align}
then
\begin{align}
\Pr
\left[
\mathcal{E}^{\mathcal{D}}(\rho)=a
\right]
&\geq
4\varepsilon^{2}.
\label{eq:kt-gl-recovery}
\end{align}
Here, $\mathcal{E}^{\mathcal{D}}$ denotes the coherent black-box use of a
unitary implementation of $\mathcal{D}$ and its inverse.
\end{lemma}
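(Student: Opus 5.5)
The plan is to follow the Adcock--Cleve reduction of Goldreich--Levin to Bernstein--Vazirani, in the coherent form used in \cite{STOC:KhuTom24}. First I reduce to pure advice: write $\rho=\sum_i p_i|\psi_i\rangle\langle\psi_i|$. Running $\mathcal{E}$ on $\rho$ is the same as running it on $|\psi_i\rangle$ with probability $p_i$, and $\mathcal{D}$ acts only on the advice register and its own workspace. So if $\varepsilon_i$ is the advantage of $\mathcal{D}$ on $|\psi_i\rangle$, then $\sum_i p_i\varepsilon_i=\varepsilon$. Suppose I show success probability at least $4\varepsilon_i^2$ on each $|\psi_i\rangle$; this bound holds for all real $\varepsilon_i$, including negative ones, since it will come from a squared overlap. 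Convexity of $t\mapsto t^2$ then gives $\sum_ip_i4\varepsilon_i^2\ge 4\varepsilon^2$. Hence it suffices to treat a pure state $|\psi\rangle$.

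Let $U_{\mathcal{D}}$ be a unitary implementation of $\mathcal{D}$ that maps $|r\rangle|\psi\rangle|0\rangle$ to a state whose designated output qubit, when measured, gives $\mathcal{D}$'s bit. For each $r$, decompose
\begin{align}
U_{\mathcal{D}}|r\rangle|\psi\rangle|0\rangle=|r\rangle\left(|\langle a,r\rangle\rangle|g_r\rangle+|1\oplus\langle a,r\rangle\rangle|b_r\rangle\right),
\end{align}
with $\|b_r\|^2=\Pr[\mathcal{D}(\psi,r)\neq\langle a,r\rangle]$. The extractor $\mathcal{E}$ works as follows:
\begin{enumerate}
\item prepare the uniform superposition over $r\in\{0,1\}^n$ together with $|\psi\rangle|0\rangle$;
\item apply $U_{\mathcal{D}}$, then a $Z$ gate on the output qubit, then $U_{\mathcal{D}}^{\dagger}$;
\item apply $H^{\otimes n}$ to the $r$ register and measure it.
\end{enumerate}

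The key computation concerns step 2. Applying $Z$ gives $(-1)^{\langle a,r\rangle}$ times the original vector, minus $2(-1)^{\langle a,r\rangle}|r\rangle|1\oplus\langle a,r\rangle\rangle|b_r\rangle$. Uncomputing with $U_{\mathcal{D}}^{\dagger}$ therefore yields
\begin{align}
(-1)^{\langle a,r\rangle}\left(|r\rangle|\psi\rangle|0\rangle-2U_{\mathcal{D}}^{\dagger}|r\rangle|1\oplus\langle a,r\rangle\rangle|b_r\rangle\right).
\end{align}
Next I bound the overlap of the final state with the target vector $|a\rangle|\psi\rangle|0\rangle$, taken after the Hadamards. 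The inner product with the projection of $U_{\mathcal{D}}|r\rangle|\psi\rangle|0\rangle$ onto the wrong-answer subspace is exactly $\|b_r\|^2$. Summing over $r$, the overlap equals $2^{-n}\sum_r(1-2\|b_r\|^2)=1-2\Pr[\text{error}]\ge 2\varepsilon$. The probability of measuring $a$ is at least the squared norm of this projection, so it is at least $4\varepsilon^2$.

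The only delicate step is this overlap bookkeeping. I have to check that the cross term really collapses to $\|b_r\|^2$: this holds because $U_{\mathcal{D}}$ is unitary, and the bad component is orthogonal to the good component. I also have to check that the garbage left in the advice and work registers does not hurt. It does not, because I lower-bound the probability of outcome $a$ by projecting onto the specific vector $|a\rangle|\psi\rangle|0\rangle$, which is an underestimate. Finally, $\mathcal{E}$ uses $U_{\mathcal{D}}$ and its inverse once each plus $O(n)$ gates, so it is QPT.
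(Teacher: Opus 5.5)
Your argument is correct and is essentially the Adcock--Cleve phase-kickback/Bernstein--Vazirani computation that the paper relies on: the paper gives no proof of this lemma and imports it from \cite{STACS:AdcCle02,STOC:KhuTom24}, and your convexity step over a pure-state decomposition of $\rho$ handles the quantum-advice extension (purifying $\rho$ would work equally well). The one implicit point is that the $r\neq r'$ cross terms vanish only because $U_{\mathcal{D}}$ keeps $r$ as a read-only control register, which $\mathcal{E}$ can always enforce by first copying $r$ into a fresh register.
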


\begin{corollary}
\label{coro:averageKT}
Fix $n\in\mathbb{N}$ and an arbitrary family of strings
$\{ x_{y}\in\{0,1\}^{n} \}_{y\in\{0,1\}^{n}}$.
Let $\mathcal{O}$ be any fixed collection of reversible oracle
unitaries whose inverse queries are available, and let
$\mathcal{P}^{\mathcal{O}}$ be a uniform QPT oracle algorithm that
outputs a bit.  Let $\rho$ be an arbitrary mixed advice state that 
may depend on
$n$, on $\mathcal{O}$, on the description of $\mathcal{P}$, and on the
family $\{x_y\}_y$, but it may
not depend on $y$ and $r$.
Suppose that, for some $\varepsilon>0$,
\begin{align}
\Pr_{\substack{
 y\leftarrow\{0,1\}^{n}\\
 r\leftarrow\{0,1\}^{n}
}}
\left[
 \mathcal{P}^{\mathcal{O}}(\rho,y,r)
 =
 \langle x_{y},r\rangle
\right]
&\geq
\frac{1}{2}+\varepsilon.
\label{eq:average-index-prediction}
\end{align}
Then there exists a uniform QPT oracle algorithm
$\mathcal{R}^{\mathcal{O}}$, whose code depends only on the code of
$\mathcal{P}$, such that
\begin{align}
\Pr_{
 y\gets\{0,1\}^{n}
}
\left[
 x_y\gets \mathcal{R}^{\mathcal{O}}(\rho,y)
\right]
&\geq
4\varepsilon^{2}.
\label{eq:average-index-recovery}
\end{align}
\end{corollary}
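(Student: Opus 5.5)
The plan is to reduce \cref{coro:averageKT} to \cref{lem:KT} pointwise in $y$ and then average the resulting quadratic bound using convexity.

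\textbf{Step 1: per-index advantages.} For each $y\in\{0,1\}^n$, define
\begin{align}
\varepsilon_y\coloneqq\Pr_{r\leftarrow\{0,1\}^n}\left[\mathcal{P}^{\mathcal{O}}(\rho,y,r)=\langle x_y,r\rangle\right]-\frac{1}{2}.
\end{align}
The hypothesis \eqref{eq:average-index-prediction} says exactly that $\mathbb{E}_y[\varepsilon_y]\ge\varepsilon$. Some $\varepsilon_y$ may be negative, so we work with $\varepsilon_y^+\coloneqq\max(\varepsilon_y,0)$. Then $\mathbb{E}_y[\varepsilon_y^+]\ge\mathbb{E}_y[\varepsilon_y]\ge\varepsilon$.

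\textbf{Step 2: apply \cref{lem:KT} for each fixed $y$.} Define $\mathcal{D}_y(\sigma,r)\coloneqq\mathcal{P}^{\mathcal{O}}(\sigma,y,r)$, with $y$ hardwired classically. This is a QPT algorithm that outputs a bit. Because $\mathcal{O}$ consists of reversible unitaries whose inverses are available, $\mathcal{D}_y$ has a unitary implementation together with its inverse, which is exactly the coherent black-box access that $\mathcal{E}$ in \cref{lem:KT} requires.

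The advice $\rho$ does not depend on $y$ or $r$, so for fixed $y$ the state $\rho$ is a legitimate fixed input state for \cref{lem:KT}. Also, $r$ is uniform and independent of $\rho$, as the lemma requires.

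Define $\mathcal{R}^{\mathcal{O}}(\rho,y)\coloneqq\mathcal{E}^{\mathcal{D}_y}(\rho)$. Its code depends only on the code of $\mathcal{P}$ and on the fixed algorithm $\mathcal{E}$. It is uniform QPT because $\mathcal{E}$ makes polynomially many calls to $\mathcal{D}_y$ and each call costs polynomial time and polynomially many oracle queries.

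Whenever $\varepsilon_y>0$, \cref{lem:KT} gives $\Pr[\mathcal{R}^{\mathcal{O}}(\rho,y)=x_y]\ge4\varepsilon_y^2=4(\varepsilon_y^+)^2$. When $\varepsilon_y\le0$, the inequality $\Pr[\mathcal{R}^{\mathcal{O}}(\rho,y)=x_y]\ge0=4(\varepsilon_y^+)^2$ holds trivially.

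\textbf{Step 3: average over $y$.} Averaging the per-$y$ bound and applying Jensen's inequality to the convex map $t\mapsto t^2$ gives
\begin{align}
\Pr_y\left[x_y\gets\mathcal{R}^{\mathcal{O}}(\rho,y)\right]\ge4\,\mathbb{E}_y\left[(\varepsilon_y^+)^2\right]\ge4\left(\mathbb{E}_y[\varepsilon_y^+]\right)^2\ge4\varepsilon^2.
\end{align}

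\textbf{Main obstacle.} The delicate point is the uniformity of $\mathcal{E}$ with respect to the parameter $\varepsilon$. Some versions of quantum Goldreich--Levin take $\varepsilon$ as an input or repeat the procedure $\poly(1/\varepsilon)$ times. \cref{lem:KT} as stated gives a single QPT $\mathcal{E}$ that works for every $\varepsilon>0$ and whose success probability merely degrades to $4\varepsilon^2$. This is what makes it legitimate to apply the lemma with a different, unknown $\varepsilon_y$ for each $y$ while keeping one fixed algorithm $\mathcal{R}$. I would also double-check that using the same advice $\rho$ across all $y$ causes no trouble; it does not, because each run consumes a fresh copy of $\rho$ and the analysis is per-$y$.
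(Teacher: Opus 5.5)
Your proof is correct and follows the paper's argument essentially line for line. The paper likewise defines the per-$y$ advantage $\delta_y$ and its positive part, hardwires $y$ into $\mathcal{P}$ to obtain $\mathcal{D}_y$, sets $\mathcal{R}^{\mathcal{O}}(\rho,y):=\mathcal{E}^{\mathcal{D}_y^{\mathcal{O}}}(\rho)$, applies \cref{lem:KT} when $\delta_y>0$ (using the trivial bound otherwise), and concludes with Jensen's inequality for $t\mapsto t^2$.
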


\begin{proof}
For every $y\in\{0,1\}^{n}$, define 
\begin{align}
\delta_{y}
&:=
\Pr_{r\leftarrow\{0,1\}^{n}}
\left[
 \mathcal{P}^{\mathcal{O}}(\rho,y,r)
 =
 \langle x_{y},r\rangle
\right]
-\frac{1}{2},
\label{eq:conditional-signed-advantage}
\\
\delta_{y}^{+}
&:=
\max\{\delta_{y},0\}.
\label{eq:positive-part-advantage}
\end{align}
Equation~\eqref{eq:average-index-prediction} is equivalent to
\begin{align}
\mathbb{E}_{y\leftarrow\{0,1\}^{n}}
\left[
 \delta_{y}
\right]
&\geq
\varepsilon.
\label{eq:average-signed-advantage}
\end{align}
Fix $y\in\{0,1\}^{n}$ and define the predictor
$\mathcal{D}_{y}^{\mathcal{O}}$ by hardwiring $y$ into
$\mathcal{P}^{\mathcal{O}}$:
\begin{align}
\mathcal{D}_{y}^{\mathcal{O}}(\rho,r)
&:=
\mathcal{P}^{\mathcal{O}}(\rho,y,r).
\label{eq:fixed-y-predictor}
\end{align}
Define the recovery algorithm by
\begin{align}
\mathcal{R}^{\mathcal{O}}(\rho,y)
&:=
\mathcal{E}^{\mathcal{D}_{y}^{\mathcal{O}}}(\rho),
\label{eq:average-index-recovery-algorithm}
\end{align}
where $\mathcal{E}$ is defined
in
\cref{eq:kt-gl-recovery}. 
If $\delta_{y}>0$, then \cref{lem:KT}
applied with $a=x_{y}$ and
$\varepsilon=\delta_{y}$, gives
\begin{align}
\Pr
\left[
 \mathcal{R}^{\mathcal{O}}(\rho,y)=x_{y}
\right]
&\geq
4\delta_{y}^{2}.
\label{eq:fixed-y-positive-recovery}
\end{align}
If $\delta_{y}\leq 0$, the recovery probability is nonnegative.
Consequently, for every $y\in\{0,1\}^{n}$,
\begin{align}
\Pr
\left[
 \mathcal{R}^{\mathcal{O}}(\rho,y)=x_{y}
\right]
&\geq
4\left(\delta_{y}^{+}\right)^{2}.
\label{eq:fixed-y-recovery-positive-part}
\end{align}
Averaging Equation~\eqref{eq:fixed-y-recovery-positive-part} over a
uniform $y$ and applying Jensen's inequality to the convex function
$t\mapsto t^{2}$ yields
\begin{align}
\Pr_{\substack{
 y\leftarrow\{0,1\}^{n}\\
 \widehat{x}\leftarrow
 \mathcal{R}^{\mathcal{O}}(\rho,y)
}}
\left[
 \widehat{x}=x_{y}
\right]
\geq
4\mathbb{E}_{y}
\left[
 \left(\delta_{y}^{+}\right)^{2}
\right]
\geq
4\left(
 \mathbb{E}_{y}
 \left[
  \delta_{y}^{+}
 \right]
\right)^{2}
\geq
4\left(
 \mathbb{E}_{y}
 \left[
  \delta_{y}
 \right]
\right)^{2}
\geq
4\varepsilon^{2}.
\label{eq:average-recovery-final}
\end{align}
Here 
the final
inequality follows from
\cref{eq:average-signed-advantage}.
\end{proof}

\begin{lemma}[Decision and search hardness]
\label{lem:decision-and-search-hardness}
Let $\pi\coloneqq(\pi_1,\pi_2,...)$, where $\pi_\ell$ is a permutation over $\bit^\ell$.
For a string $z\in\{0,1\}^{N}$ with $N\geq2$, set
$n:=\lfloor N/2\rfloor$ and parse
$z
=
y\|r\|s$,
$|y|=|r|=n$,
and $|s|=N-2n$.
Define
the language
\begin{align}
L^{\pi}(z)
:=
\left\langle
\pi_n^{-1}(y),r
\right\rangle
\bmod 2.
\label{eq:hard-language-definition-complete}
\end{align}
In other words, $z \in L^\pi$ if and only if $\left\langle \pi_n^{-1}(y),r \right\rangle = 1$.
The three strings of length less than two, namely, the empty string, 0 and 1, may be assigned
arbitrary values.  Define also
the relation
\begin{align}
R^{\pi}(y,w)=1
\quad\Longleftrightarrow\quad
|y|=|w|
\text{ and }
V^{\pi}(|y|,w,y)=1.
\label{eq:hard-search-relation-complete}
\end{align} 
Let $\mathcal{B}$ be any fixed collection of reversible oracle
unitaries independent of $\pi$, and suppose that inverse
queries to every oracle in $\mathcal{B}$ are available.  Let $\pi$ be
a permutation family satisfying
Corollary~\ref{cor:fixed-oracle-inversion-hardness}.
Then, relative to $V^{\pi}$,
\begin{align}
R^{\pi}
&\in
\operatorname{TFUP}^{V^{\pi}},
\label{eq:hard-relation-in-tfup}
\\
L^{\pi}
&\in
\operatorname{UP}^{V^{\pi}}
\cap
\operatorname{coUP}^{V^{\pi}}.
\label{eq:hard-language-in-up-intersection-coup}
\end{align}
Moreover, relative to $(V^\pi,\cB)$, $R^{\pi}$ is average-case hard against 
QPT algorithms with polynomial-size oracle-dependent quantum advice,
and $L^{\pi}$ is strongly hard on average against
QPT algorithms with polynomial-size oracle-dependent quantum advice.
\end{lemma}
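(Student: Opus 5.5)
The plan has two parts: first the membership statements, then the two hardness statements, each reduced to \cref{cor:fixed-oracle-inversion-hardness}.

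\textbf{Membership.} For $R^\pi$, polynomial balance is immediate since $|w|=|y|$, and the verifier is a single query $V^\pi(|y|,w,y)$. Totality and uniqueness hold because $\pi_{|y|}$ is a permutation, so $w=\pi_{|y|}^{-1}(y)$ is the unique valid solution. For $L^\pi$, the certificate for both $z\in L^\pi$ and $z\notin L^\pi$ is $w=\pi_n^{-1}(y)$. The verifier checks $V^\pi(n,w,y)=1$ and then computes $\langle w,r\rangle \bmod 2$, accepting in the appropriate case. Exactly one certificate exists, and it is accepted by exactly one of the two verifiers. Hence $L^\pi\in\mathsf{UP}^{V^\pi}\cap\mathsf{coUP}^{V^\pi}$. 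The strings of length less than two are handled by a hardwired table.

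\textbf{Search hardness.} Average-case hardness of $R^\pi$ is literally \cref{cor:fixed-oracle-inversion-hardness}, with $n=N$ and side oracles $\mathcal{B}$. A solver $\cA(1^N,y,\rho)$ outputting $w$ with $V^\pi(N,w,y)=1$ outputs $\pi_N^{-1}(y)$, and the advice $\rho_{N,V^\pi}$ is independent of the challenge $y$. The only care needed is to note that the polynomial query bound $T$ exists because $\cA$ is QPT.

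\textbf{Decision hardness.} Suppose, for contradiction, that some QPT $\cA$ with advice $\rho_{N}$ satisfies
\begin{align}
\Pr_z[\cA(1^N,z,\rho_N)=L^\pi(z)]\ge \tfrac12+\varepsilon(N)
\end{align}
for a non-negligible $\varepsilon$, i.e.\ $\varepsilon(N)\ge 1/q(N)$ for infinitely many $N$. Fix such an $N$ and set $n=\lfloor N/2\rfloor$. Define the predictor $\mathcal{P}(\rho',y,r)$ as follows: sample $s\gets\{0,1\}^{N-2n}$ itself, then run $\cA(1^N,y\|r\|s,\rho')$. We cannot fix the best $s$ nonuniformly, since advice may depend on $\pi$ but the code must stay uniform. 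Instead we either fold $s$ into the predictor's randomness, or note that $N\in\{2n,2n+1\}$ and handle the one extra bit by internal sampling; both preserve the average advantage. Apply \cref{coro:averageKT} with $x_y=\pi_n^{-1}(y)$ and $\mathcal{O}=(V^\pi,\mathcal{B})$. Here $V^\pi$ is a classical oracle implemented as a self-inverse XOR unitary, so inverse queries are available. The advice is $\rho=\rho_{N,\pi}$, which does not depend on $y$ or $r$. The corollary yields a uniform $\mathcal{R}$ that inverts $\pi_n$ on random $y$ with probability at least $4\varepsilon^2\ge 4/q(N)^2$.

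\textbf{Contradiction, and the main obstacle.} The inverter $\mathcal{R}$ makes polynomially many queries, and its advice is an $S(N)=\poly(n)$-qubit state. Viewed as an algorithm on input length $n$ (the input $1^N$ is recoverable from $n$ plus one bit, which the machine can try or which we index separately for the two parities), this contradicts the negligible bound of \cref{cor:fixed-oracle-inversion-hardness} for infinitely many $n$. I expect the main obstacle to be this bookkeeping: making $\mathcal{R}$ a single uniform machine indexed by $n$ while the advice family is indexed by $N$. I would handle it by defining two advice families, one for even $N$ and one for odd $N$, with advice $\rho_{2n+b,\pi}$ for $b\in\{0,1\}$. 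Since $\varepsilon$ is non-negligible along infinitely many $N$, at least one parity class contains infinitely many such $N$, and we apply the corollary to the corresponding machine. A second point to check is that the reduction uses $\mathcal{D}$ coherently together with its inverse, so $\cA$ must be implemented unitarily with inverse access to $\mathcal{B}$; the hypothesis of the lemma provides exactly this.
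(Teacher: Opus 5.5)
Your proposal is correct and follows essentially the same route as the paper. Membership comes from the unique witness $\pi_n^{-1}(y)$, search hardness comes directly from \cref{cor:fixed-oracle-inversion-hardness}, and decision hardness comes from turning $\cA$ into parity-indexed predictors (sampling the padding bit internally), applying \cref{coro:averageKT} with $x_y=\pi_n^{-1}(y)$, and fixing a parity $\beta^\star$ with infinitely many good lengths, using advice $\rho_{2n+\beta^\star}$ and the self-inverse $V^\pi$ plus inverse access to $\mathcal{B}$ exactly as the paper does.
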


\begin{proof}
For every $y\in\{0,1\}^{n}$, the relation
$R^{\pi}$ has exactly one valid solution, namely
$w
=
\pi_{n}^{-1}(y).$
Every valid solution satisfies $|w|=|y|$, so the relation is
polynomially balanced.  A deterministic polynomial-time oracle
verifier checks whether
$V^{\pi}(n,w,y)
=
1$. 
This proves Equation~\eqref{eq:hard-relation-in-tfup}.  Furthermore,
the average-case
hardness of $R^{\pi}$ follows directly from
Corollary~\ref{cor:fixed-oracle-inversion-hardness}.

We next prove Equation~\eqref{eq:hard-language-in-up-intersection-coup}.
The unique witness for
both $L^{\pi}$ and its complement is
$w=\pi_{n}^{-1}(y)$.  The verifier first checks
$V^{\pi}(n,w,y)
=
1$,
and then checks
$\langle w,r\rangle
=
1
\pmod 2$
for membership in $L^{\pi}$, or
$
\langle w,r\rangle
=
0
\pmod 2$
for membership in its complement.  Since $\pi_n$ is a permutation,
there is exactly one possible witness. 
This
proves Equation~\eqref{eq:hard-language-in-up-intersection-coup}.

It remains to prove the strong average-case hardness of $L^{\pi}$. 
Assume toward a contradiction that it fails.  Then there exist a
polynomial $S$, a uniform QPT oracle algorithm $\mathcal{A}$, an
$S(N)$-qubit oracle-dependent quantum advice family
$\left\{
\rho_{N}
\right\}_{N\in\mathbb{N}}$\footnote{For notational simplicity, we omit the dependence of $\rho$ on $V^\pi$, $\cB$ and $\cA$.},
and a positive polynomial $q$ such that, for infinitely many input
lengths $N$,
\begin{align}
\Pr_{z\leftarrow\{0,1\}^{N}}
\left[
\mathcal{A}^{V^{\pi},\mathcal{B}}
\left(1^{N},z,\rho_{N}\right)
=
L^{\pi}(z)
\right]
&\geq
\frac{1}{2}
+
\frac{1}{q(N)}.
\label{eq:decision-nonnegligible-advantage}
\end{align}

For $n\in\mathbb{N}$ and $\beta\in\{0,1\}$, define
$N_{\beta}(n)
\coloneqq
2n+\beta$.
We define two uniform QPT predictors.  On input $(\rho,y,r)$ with
$|y|=|r|=n$, let
\begin{align}
\mathcal{P}_{0}^{V^{\pi},\mathcal{B}}(\rho,y,r)
&:=
\mathcal{A}^{V^{\pi},\mathcal{B}}
\left(1^{2n},y\|r,\rho\right).
\label{eq:even-length-predictor}
\end{align}
The predictor $\mathcal{P}_{1}$ samples
$s\leftarrow\{0,1\}$ and outputs
\begin{align}
\mathcal{P}_{1}^{V^{\pi},\mathcal{B}}(\rho,y,r)
&:=
\mathcal{A}^{V^{\pi},\mathcal{B}}
\left(1^{2n+1},y\|r\|s,\rho\right).
\label{eq:odd-length-predictor}
\end{align}
The random bit in Equation~\eqref{eq:odd-length-predictor} can be
generated coherently and its measurement can be deferred.  Since
$V^{\pi}$ is self-inverse and inverse queries to $\mathcal{B}$ are
available, both predictors admit the coherent forward and inverse
implementations required by \cref{coro:averageKT}.

For each $\beta\in\{0,1\}$, define 
\begin{align}
\alpha_{n,\beta}
:={}&
\Pr_{\substack{
y\leftarrow\{0,1\}^{n}\\
r\leftarrow\{0,1\}^{n}
}}
\left[
\mathcal{P}_{\beta}^{V^{\pi},\mathcal{B}}
\left(
\rho_{N_{\beta}(n)},y,r
\right)
=
\left\langle
\pi_{n}^{-1}(y),r
\right\rangle
\right]
-
\frac{1}{2}.
\label{eq:associated-prediction-advantage}
\end{align}
Because $L^{\pi}$ ignores the padding string $s$, 
\begin{align}
\alpha_{n,\beta}
={}&
\Pr_{z\leftarrow\{0,1\}^{N_{\beta}(n)}}
\left[
\mathcal{A}^{V^{\pi},\mathcal{B}}
\left(
1^{N_{\beta}(n)},
z,
\rho_{N_{\beta}(n)}
\right)
=
L^{\pi}(z)
\right]
-
\frac{1}{2}.
\label{eq:prediction-equals-decision-advantage}
\end{align}
For each $\beta\in\{0,1\}$, apply
\cref{coro:averageKT} to
$\mathcal{P}_{\beta}$.  This gives a uniform QPT recovery algorithm
$\mathcal{R}_{\beta}^{V^{\pi},\mathcal{B}}$ such that, whenever
$\alpha_{n,\beta}>0$,
\begin{align}
\Pr_{\substack{
y\leftarrow\{0,1\}^{n}\\
\widehat{w}_{\beta}
\leftarrow
\mathcal{R}_{\beta}^{V^{\pi},\mathcal{B}}
\left(
\rho_{N_{\beta}(n)},y
\right)
}}
\left[
\widehat{w}_{\beta}
=
\pi_{n}^{-1}(y)
\right]
&\geq
4\alpha_{n,\beta}^{2}.
\label{eq:recovery-for-each-parity}
\end{align}
Each invocation in Equation~\eqref{eq:recovery-for-each-parity}
uses one copy of the corresponding advice state.

Let $\mathcal{N}\subseteq\mathbb{N}$ be the infinite set of input
lengths for which
Equation~\eqref{eq:decision-nonnegligible-advantage} holds.  For
$\beta\in\{0,1\}$, define
$\mathcal{N}_{\beta}
\coloneqq
\left\{
 n\in\mathbb{N}
 :
 2n+\beta\in\mathcal{N}
\right\}.$
Since $\mathcal{N}$ is infinite, at least one of
$\mathcal{N}_{0}$ and $\mathcal{N}_{1}$ is infinite.  Fix
$\beta^{\star}\in\{0,1\}$ such that
$\mathcal{N}_{\beta^{\star}}$ is infinite.
For every $n\in\mathcal{N}_{\beta^{\star}}$,
\cref{eq:decision-nonnegligible-advantage} and
\cref{eq:prediction-equals-decision-advantage} give
$\alpha_{n,\beta^{\star}}
\geq
\frac{1}{q(2n+\beta^{\star})}.$
In particular, $\alpha_{n,\beta^{\star}}>0$, and hence
Equation~\eqref{eq:recovery-for-each-parity} implies
\begin{align}
\Pr_{\substack{
 y\leftarrow\{0,1\}^{n}\\
 \widehat{w}
 \leftarrow
 \mathcal{R}_{\beta^{\star}}^{V^{\pi},\mathcal{B}}
 \left(
 \rho_{2n+\beta^{\star}},y
 \right)
}}
\left[
 \widehat{w}
 =
 \pi_{n}^{-1}(y)
\right]
&\geq
\frac{4}{q(2n+\beta^{\star})^{2}}
\label{eq:fixed-parity-inversion-success}
\end{align}
for infinitely many $n$.
This contradicts
Corollary~\ref{cor:fixed-oracle-inversion-hardness}. 
\end{proof}

\section{Patching Lemma}
\label{subsec:adaptive-erasure-complete}

\begin{definition}[Patches]
\label{def:classical-patch-complete}
A \emph{patch} is a finite set $\Gamma$ of triples $(\ell,u,v)$ with
$|u|=|v|=\ell$.  It defines the oracle $\Gamma$ as
\begin{align}
\Gamma(\ell,u,v)
:=
\begin{cases}
1, & (\ell,u,v)\in\Gamma,\\
0, & \text{otherwise}.
\end{cases}
\label{eq:patched-boolean-oracle-complete}
\end{align}
\end{definition}


\begin{lemma}[Patching lemma]
\label{lem:uniform-public-input-erasure-complete}
Fix a side oracle $\mathcal{B}$ independent of $\pi$.  With probability
one over $\pi$, the following holds simultaneously for every uniform QPT
oracle algorithm $\mathcal{A}$ and every positive polynomial $r$.

Suppose that, on input $(x,b)$ with $x\in\{0,1\}^{n}$ and
$b\in\{0,1\}$, $\mathcal{A}^{V^{\pi},\mathcal{B}}$ outputs a mixed state.
Then there is a uniform QPT oracle algorithm
\begin{align}
\Gamma
\gets
\mathcal{E}^{V^{\pi},\mathcal{B}}_{\mathcal{A},r}(x)
\label{eq:quantum-patch-extractor-complete}
\end{align}
that receives $x$ but not the bit $b$, and outputs a polynomial-length
patch $\Gamma$. (Here, $\cE$ queries $V^\pi$ only classically.)
If
\begin{align}
\rho_{b,x}
&:=
\mathcal{A}^{V^{\pi},\mathcal{B}}(x,b),
\label{eq:original-output-state-erasure-complete}
\\
\sigma^{\Gamma}_{b,x}
&:=
\mathcal{A}^{\Gamma,\mathcal{B}}(x,b),
\label{eq:patched-output-state-erasure-complete}
\end{align}
then, for all sufficiently large $n$, simultaneously for every
$x\in\{0,1\}^{n}$,
\begin{align}
\frac{1}{2}
\sum_{b\in\{0,1\}}
\mathbb{E}_{\Gamma}
\left[
\operatorname{TD}
\left(
\rho_{b,x},
\sigma^{\Gamma}_{b,x}
\right)
\right]
\leq
\frac{1}{r(n)}.
\label{eq:uniform-public-input-erasure-bound-complete}
\end{align}
The same sampled patch is used for both values of $b$.
\end{lemma}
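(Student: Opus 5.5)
Fix a uniform QPT $\mathcal{A}$ and a polynomial $r$; at the end we take a countable intersection over all pairs $(\mathcal{A},r)$. Let $T=T(n)$ bound the number of $V^\pi$-queries of $\mathcal{A}$ on inputs of length $n$ (for both $b$). We choose $R=R(n)$ a large polynomial, fixed below, and $m=\lceil\log_2(2R)\rceil=O(\log n)$.

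\textbf{The extractor.} $\mathcal{E}_{\mathcal{A},r}(x)$ first queries $V^\pi$ classically on all triples with $\ell\le m$. This costs $\poly(n)$ queries and yields the full truth table $\Gamma_0$ on short blocks. Then, for stages $k=1,\dots,R$, it does the following. Pick $b\gets\{0,1\}$ and $i\gets[T]$. Run $\mathcal{A}^{\Gamma_{k-1},\mathcal{B}}(x,b)$ up to its $i$-th query and measure the query register to obtain $(\ell,u,v)$. Query $V^\pi(\ell,u,v)$ classically, and set $\Gamma_k=\Gamma_{k-1}\cup\{(\ell,u,v)\}$ if the answer is $1$, otherwise $\Gamma_k=\Gamma_{k-1}$. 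Finally, output $\Gamma_J$ for $J\gets\{0,\dots,R-1\}$. By construction $\Gamma_J\subseteq V^\pi$, so $\Gamma_J$ and $V^\pi$ differ only on the positive points $D_J$ of $V^\pi$ that are not in $\Gamma_J$, all of which have $\ell>m$.

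\textbf{Step 1 (BBBV hybrid).} By the standard hybrid argument, $\mathrm{TD}(\rho_{b,x},\sigma^{\Gamma_J}_{b,x})\le 2\sqrt{T\sum_{i}w_i}$, where $w_i$ is the query weight that $\mathcal{A}^{\Gamma_J}(x,b)$ places on $D_J$ at query $i$. Let $p_j$ be the probability that stage $j+1$, run with patch $\Gamma_j$, measures a point of $D_j$. Then $p_j=\frac{1}{2T}\sum_{b,i}w_i^{(b)}(\Gamma_j)$, so
\[
\tfrac12\sum_b \mathrm{TD}\le 2\sqrt{2}\,T\sqrt{p_J}.
\]
Averaging over $J$ and the extractor's randomness and applying Jensen gives
\[
\tfrac12\sum_b\mathbb{E}_\Gamma\,\mathrm{TD}\le 2\sqrt2\,T\sqrt{\kappa_\pi(x)/R},
\]
where $\kappa_\pi(x)=\mathbb{E}\big[\sum_j p_j\big]$ is the expected number of discoveries. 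This holds because a measured point of $D_j$ is exactly a discovery, and any discovery is such a point.

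\textbf{Step 2 (exponential moment, the main obstacle).} I expect this step to be the hard one: it must be uniform in $x$ and must handle the permutation structure. Condition on $\pi_{\le m}$ and on the full transcript of stages $1,\dots,k-1$. The transcript reveals only polynomially many answers about $\pi_\ell$ for $\ell>m$: the positive answers fix pairs, and the negative answers exclude pairs. Moreover, $\mathcal{A}^{\Gamma_{k-1}}$ does not depend on the unrevealed part of $\pi$. The plan is therefore to show, by counting perfect matchings consistent with the revealed constraints, that any fresh triple is positive with probability at most $1/(2^m-R)\le 1/R$. Negative answers make the count slightly non-uniform, so this bound requires a careful argument, for instance an injection between matchings. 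Given this conditional bound, the discovery count $K$ satisfies $\mathbb{E}[2^{K}]\le(1+1/R)^R\le e$ by iterated conditioning. Jensen then gives $2^{\kappa_\pi(x)}\le\mathbb{E}[2^K\mid\pi]$, and log-sum-exp gives
\[
\mathbb{E}_\pi\max_{x\in\{0,1\}^n}\kappa_\pi(x)\le n+2.
\]

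\textbf{Step 3 (conclusion).} Choose $R\ge 32\,T^2(n+2)\,r(n)^2n^4$. Then, by Step 1 and Step 2,
\[
\mathbb{E}_\pi\max_x\tfrac12\sum_b\mathbb{E}_\Gamma\mathrm{TD}\le \frac{1}{r(n)n^2}.
\]
Markov's inequality bounds the probability that this maximum exceeds $1/r(n)$ by $1/n^2$, which is summable in $n$. By Borel--Cantelli, with probability one over $\pi$ the bound \eqref{eq:uniform-public-input-erasure-bound-complete} holds for all sufficiently large $n$ and all $x$ simultaneously. A countable intersection over all $(\mathcal{A},r)$ finishes the proof.
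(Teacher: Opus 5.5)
Your proposal is essentially the paper's proof: the same extractor (short-block table plus $R$ measure-and-test stages, outputting $\Gamma_J$ for a random $J$), the same BBBV bound in terms of the per-stage discovery probability (your $2\sqrt{2}\,T\sqrt{p_J}$ is valid but looser than the paper's $2T\sqrt{p_x(D)}$, and your larger $R$ compensates), the same exponential moment $\mathbb{E}[2^K]\le(1+1/R)^R\le e$ with log-sum-exp over the $2^n$ inputs, and the same Markov/Borel--Cantelli/countable-intersection finish. The one step you leave as a plan, the $1/(2^m-R)$ bound despite refuted pairs, the paper closes via $\Pr[\sigma(u)=v\mid\text{no refuted pair hit}]\le\Pr[\sigma(u)=v]/\Pr[\text{no refuted pair hit}]\le\frac{1/(2^\ell-k_+)}{1-k_-/(2^\ell-k_+)}=\frac{1}{2^\ell-k_+-k_-}$ (a union bound, since each refuted pair is hit with probability at most $1/(2^\ell-k_+)$); your injection idea also works, because for each consistent $\sigma$ with $\sigma(u)=v$ the transposed permutations $\sigma\circ(u\;a)$ are consistent and satisfy $\sigma(u)\neq v$ for all but at most $k_-$ of the $2^\ell-k_+-1$ free choices of $a$, and the map $(\sigma,a)\mapsto\sigma\circ(u\;a)$ is injective, giving the same bound $1/(2^\ell-k_+-k_-)$.
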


\begin{proof}
Fix $\mathcal{A}$, $r$, and an input length $n$.  Let $T$ be the number of queries $\cA$ makes to $V^\pi$.
Set
\begin{align}
R
&:=
\left\lceil
16\,T^{2}(n+2)\,r^2n^{4}
\right\rceil,
\label{eq:erasure-stages-complete}
\\
m
&:=
\left\lceil
\log_{2}(2R)
\right\rceil,
\qquad
\text{so that}
\quad
2^{m}\geq 2R.
\label{eq:erasure-cutoff-complete}
\end{align}
Note that $R=\operatorname{poly}(n)$ and $2^{m}<4R$, so all quantities
below are polynomial in $n$.

\paragraph{The extractor.}
The extractor $\mathcal{E}$ first learns, by classical queries to
$V^{\pi}$, the complete positive table of the \emph{short} blocks,
\begin{align}
\operatorname{Short}
:=
\left\{
(\ell,u,v):
\ell<m,\
V^{\pi}(\ell,u,v)=1
\right\};
\label{eq:short-positive-table-complete}
\end{align}
this takes fewer than $2^{2m}<16R^{2}$ queries.  For a set $D$ of
triples in blocks $\ell\geq m$, write
$\Gamma(D):=\operatorname{Short}\cup D$.  

Define the auxiliary sampler
$\bar{\mathcal{A}}$ as follows: 
\begin{enumerate}
    \item 
On input a patch $\Gamma$ and $x$, sample $b\gets\{0,1\}$ and $i\gets[T]$ uniformly.
\item 
Run
$\mathcal{A}^{\Gamma,\mathcal{B}}(x,b)$ until immediately before its
$i$-th query.
\item 
Measure the query input register, and output the outcome.
\end{enumerate}

The extractor $\cE$ then proceeds as follows:
\begin{enumerate}
\item 
Initialize
$D_{0}:=\varnothing$.  
\item
For $j=0,\ldots,R-1$: run
$Q_{j}\gets\bar{\mathcal{A}}^{\Gamma(D_{j}),\mathcal{B}}(x)$; if
$Q_{j}=(\ell,u,v)$ with $\ell\geq m$ and $Q_{j}\notin D_{j}$, query
$V^{\pi}$ classically on $Q_{j}$ and set
$D_{j+1}:=D_{j}\cup\{Q_{j}\}$ if the answer is $1$ and
$D_{j+1}:=D_{j}$ otherwise; in all other cases set $D_{j+1}:=D_{j}$.
\item
Sample $J\gets\{0,\ldots,R-1\}$ and output
\begin{align}
\Gamma
:=
\Gamma(D_{J}).
\label{eq:extractor-output-patch-complete}
\end{align}
\end{enumerate}
The bits $b$ sampled inside $\bar{\mathcal{A}}$ are internal
randomness; $\mathcal{E}$ never receives the later challenge bit.  The
procedure is uniform QPT, every element of $\Gamma(D_{j})$ is a true
positive of $V^{\pi}$ by construction, and the patch $\Gamma$ has polynomial
length.

Since $\Gamma(D)$ records all positives of the short blocks and only
true positives of the long blocks, the oracles $V^{\pi}$ and
$\Gamma(D)$ differ exactly on the set of \emph{undiscovered long
positives}
\begin{align}
\Delta(D)
:=
\left\{
(\ell,u,v):
\ell\geq m,\
V^{\pi}(\ell,u,v)=1,\
(\ell,u,v)\notin D
\right\}.
\label{eq:unseen-positive-set-complete}
\end{align}
Define
\begin{align}
p_{x}(D)
&:=
\Pr\left[
\bar{\mathcal{A}}^{\Gamma(D),\mathcal{B}}(x)
\in
\Delta(D)
\right],
\label{eq:average-query-mass-complete}
\\
d_{x}(D)
&:=
\frac{1}{2}
\sum_{b\in\{0,1\}}
\operatorname{TD}
\left(
\rho_{b,x},
\sigma^{\Gamma(D)}_{b,x}
\right).
\label{eq:average-simulation-error-complete}
\end{align}
The proof consists of four steps. 

\paragraph{Step 1.}
We claim that, for every set
$D\subseteq\{(\ell,u,v):\ell\geq m,\ V^{\pi}(\ell,u,v)=1\}$,
\begin{align}
d_{x}(D)
\leq
2T\sqrt{p_{x}(D)}.
\label{eq:learn-or-simulate-average-complete}
\end{align}
This is shown based on the argument of \cite{BBBV97}.
Fix $b$, and for $i\in\{1,\ldots,T\}$ let $q_{i,b}$ be the probability
that measuring the query input register of
$\mathcal{A}^{\Gamma(D),\mathcal{B}}(x,b)$ immediately before its
$i$-th query yields a point of $\Delta(D)$.  For
$i\in\{0,\ldots,T\}$, consider the hybrid in which the first $i$
queries use $\Gamma(D)$ and the rest use $V^{\pi}$.  Adjacent
hybrids share the same state before the changed query; the two query
unitaries agree outside $\Delta(D)$, so the Euclidean distance created
at the $i$-th query is at most $2\sqrt{q_{i,b}}$, and later operations
and the partial trace do not increase it.  The triangle inequality and
Cauchy--Schwarz give
\begin{align}
\operatorname{TD}
\left(
\rho_{b,x},
\sigma^{\Gamma(D)}_{b,x}
\right)
\leq
2\sum_{i=1}^{T}\sqrt{q_{i,b}}
\leq
2T
\sqrt{
\frac{1}{T}\sum_{i=1}^{T}q_{i,b}
}.
\label{eq:learn-or-simulate-branch-complete}
\end{align}
Averaging over $b$ and using the concavity of the square root yields
Equation~\eqref{eq:learn-or-simulate-average-complete}, because the
pair $(b,i)$ sampled by $\bar{\mathcal{A}}$ is uniform and therefore
$p_{x}(D)
=\frac{1}{2}\sum_{b}\frac{1}{T}\sum_{i}q_{i,b}$.

\paragraph{Step 2.}
In this step, we show
\begin{align}
\mathbb{E}_{\mathrm{Ext},J}
\left[
d_{x}(D_{J})
\mid
\pi
\right]
\leq
2T
\sqrt{
\frac{\kappa_{\pi}(x)}{R}
}.
\label{eq:random-stage-simulation-error-saisyo}
\end{align}
Here,
\begin{align}
\kappa_\pi(x)=\mathbb{E}_{\mathrm{Ext}}[K\mid\pi], 
\label{eq:expected-total-number-of-discoveries}
\end{align}
$K\coloneqq\sum_{j=0}^{R-1}X_j$,
$X_{j}:=|D_{j+1}|-|D_{j}|\in\{0,1\}$,
and 
$\mathrm{Ext}$ denotes the internal randomness and
measurement outcomes of the extraction algorithm $\cE$.  
By construction, $X_j=1$ if and only if $Q_j\in\Delta(D_j)$, and conditioned on
$\pi$ and $D_j$, the outcome $Q_j$ is distributed as
$\bar{\mathcal{A}}^{\Gamma(D_j),\cB}(x)$. Hence
\begin{align}
\mathop{\mathbb{E}}_{\mathrm{Ext}}[X_j\mid\pi]
=\mathop{\mathbb{E}}_{\mathrm{Ext}}[p_x(D_j)\mid\pi].
\end{align}
Since $J$ is uniform over $\{0,\dots,R-1\}$ and independent of the trajectory
$(D_0,\dots,D_R)$, we obtain
\begin{align}
\mathop{\mathbb{E}}_{\mathrm{Ext},J}\left[d_x(D_J)\,\middle|\,\pi\right]
&\le 2T\,\mathop{\mathbb{E}}_{\mathrm{Ext},J}\left[\sqrt{p_x(D_J)}\,\middle|\,\pi\right]\\
&\le 2T\sqrt{\mathop{\mathbb{E}}_{\mathrm{Ext},J}\left[p_x(D_J)\,\middle|\,\pi\right]}\\
&=2T\sqrt{\frac{1}{R}\sum_{j=0}^{R-1}
\mathop{\mathbb{E}}_{\mathrm{Ext}}\left[X_j\,\middle|\,\pi\right]}\\
&=2T\sqrt{\frac{\kappa_\pi(x)}{R}},
\end{align}
where the first inequality is \cref{eq:learn-or-simulate-average-complete}
and the second is Jensen's
inequality for the concave square root.

\paragraph{Step 3.}
In this step, we show
\begin{align}
\mathbb{E}_{\pi}\left[\max_{x\in\{0,1\}^n}\kappa_\pi(x)\right]\le n+2.
\label{eq:step3goal}
\end{align}
The heart of the proof is the following exponential
moment bound, which holds uniformly for every $x\in\{0,1\}^n$:
\begin{align}
\mathop{\mathbb{E}}_{\pi,\mathrm{Ext}}\left[2^{K}\right]\le e.
\label{eq:expmoment}
\end{align}
Indeed, using $\max_x a_x\le\log_2\sum_x 2^{a_x}$ and Jensen's inequality twice
(for the concave $\log_2$ over $\pi$, and for the convex $t\mapsto 2^t$ over
$\mathrm{Ext}$), \cref{eq:expmoment} implies
\begin{align}
\mathbb{E}_{\pi}\left[\max_{x\in\{0,1\}^n}\kappa_\pi(x)\right]
\le\log_2\sum_{x\in\{0,1\}^n}\mathbb{E}_{\pi}\left[2^{\kappa_\pi(x)}\right]
\le\log_2\sum_{x\in\{0,1\}^n}\mathop{\mathbb{E}}_{\pi,\mathrm{Ext}}\left[2^{K}\right]
\le\log_2(2^n e)\le n+2.
\end{align}

It remains to prove \cref{eq:expmoment}. Let $H_j$ denote the complete classical
transcript of the extraction $\cE$ up to and including the measured outcome $Q_j$ at
stage $j$, but not the stage-$j$ test result (i.e., checking $Q_j=(\ell,u,v)$ satisfies $V^\pi(\ell,u,v)=1$). We claim that, for every $j$ and
every realization of $H_j$,
\begin{align}
\Pr_{\pi,\mathrm{Ext}}\left[X_j=1\mid H_j\right]\le\frac{1}{R}.
\label{eq:onestage}
\end{align}
Since $X_0,\dots,X_{j-1}$ are determined by $H_j$ and
$\mathbb{E}[2^{X_j}\mid H_j]=1+\Pr[X_j=1\mid H_j]$, taking conditional
expectations given $H_{R-1},H_{R-2},\dots$ in turn yields
\begin{align}
\mathop{\mathbb{E}}_{\pi,\mathrm{Ext}}\left[2^{K}\right]
\le\left(1+\frac{1}{R}\right)^{R}\le e,
\end{align}
which is \cref{eq:expmoment}.

To prove \cref{eq:onestage}, fix a realization of $H_j$ and write
$Q_j=(\ell,u,v)$. We first claim the following: if $\ell\ge m$ and the oracle
value $V^\pi(\ell,u,v)$ is not determined by the transcript, then $X_j=1$ if
and only if $\pi_\ell(u)=v$; in all other cases, $X_j=0$ deterministically.

To see the claim, we consider each case in turn. 
\begin{enumerate}
\item 
If $\ell\ge m$ and the value
is undetermined, then in particular $(\ell,u,v)\notin D_j$, so the extractor
tests $(\ell,u,v)$ and adds it to the patch exactly when $V^\pi(\ell,u,v)=1$,
i.e., when $\pi_\ell(u)=v$.
\item 
If $\ell<m$, no test is performed and $X_j=0$. 
\item 
When $\ell\ge m$ but the value is determined by
the transcript. This happens in exactly two ways. First, $(\ell,u,v)$ itself
was tested at an earlier stage: if the test was positive, then
$(\ell,u,v)\in D_j$ and no new test is performed; if it was negative, then the
new test again returns $0$. Second, an earlier positive test revealed
$\pi_\ell(u)=v'$ for some $v'\ne v$ or $\pi_\ell(u')=v$ for some $u'\ne u$;
since $\pi_\ell$ is a permutation, this forces $V^\pi(\ell,u,v)=0$, so the
test returns $0$. In every case $X_j=0$, which proves the claim.
\end{enumerate}

By the claim, it suffices to assume that $\ell\ge m$ and that the value is
undetermined, and to bound the conditional probability of $\pi_\ell(u)=v$.

We claim that, conditioned on $H_j$, the permutation $\pi_\ell$ is uniformly
distributed over the permutations of $\{0,1\}^\ell$ consistent with the
outcomes of the earlier classical tests in block $\ell$. Indeed, the
extraction interacts with $V^\pi$ only through the short-block queries, which
concern blocks $\ell'<m\le\ell$, and through the classical tests: the runs of
$\bar{\mathcal{A}}$ query only the patched oracles, which are functions of the
transcript. Since the blocks of $\pi$ are independent, no other part of the
transcript carries information about $\pi_\ell$, and the claim follows.

Among the earlier classical tests in block $\ell$, let $k_+$ be the number of
positive tests and $k_-$ that of negative tests, so that $k_++k_-\le j<R$. The
positive tests pin down the value of $\pi_\ell$ on $k_+$ inputs; we call the
remaining $2^\ell-k_+$ inputs and outputs \emph{free}. Each negative test
yields a \emph{refuted pair} $(a,b)$, meaning that $\pi_\ell(a)\ne b$; there
are $k_-$ of them. We say that a permutation $\sigma$ \emph{hits} a pair
$(a,b)$ if $\sigma(a)=b$. Let $\Sigma$ be the set of permutations of
$\{0,1\}^\ell$ that agree with the $k_+$ pinned values. In this terminology,
the claim says that the conditional distribution of $\pi_\ell$ given $H_j$ is
that of a uniform $\sigma\leftarrow\Sigma$ conditioned on hitting no refuted
pair. Hence
\begin{align}
\Pr_{\pi,\mathrm{Ext}}\left[X_j=1\,\middle|\,H_j\right]
=\Pr_{\sigma\leftarrow\Sigma}\left[\sigma(u)=v\,\middle|\,
\sigma\text{ hits no refuted pair}\right]
\le\frac{\Pr_{\sigma\leftarrow\Sigma}\left[\sigma(u)=v\right]}
{\Pr_{\sigma\leftarrow\Sigma}\left[\sigma\text{ hits no refuted pair}\right]}.
\label{eq:ratio}
\end{align}
It therefore suffices to
bound the numerator from above and the denominator from below.

For a uniform $\sigma\leftarrow\Sigma$, the values of $\sigma$ on the free
inputs form a uniformly random bijection onto the free outputs. Since the
value of $(\ell,u,v)$ is undetermined, both $u$ and $v$ are free (if $u$ were
pinned, the transcript would reveal $\pi_\ell(u)$ and thereby determine
$V^\pi(\ell,u,v)$, and similarly for $v$), and hence
\begin{align}
\Pr_{\sigma\leftarrow\Sigma}\left[\sigma(u)=v\right]=\frac{1}{2^\ell-k_+}.
\label{eq:numerator}
\end{align}
Next, each refuted pair $(a,b)$ is hit with probability at most
$1/(2^\ell-k_+)$: if both $a$ and $b$ are free this probability is exactly
$1/(2^\ell-k_+)$; if $a$ is pinned then $\sigma(a)$ equals its pinned value
$\pi_\ell(a)\ne b$, and if $b$ is pinned then $b$ is taken by its pinned input
under every $\sigma\in\Sigma$, so in either case the pair is never hit. By the
union bound,
\begin{align}
\Pr_{\sigma\leftarrow\Sigma}\left[\sigma\text{ hits no refuted pair}\right]
\ge 1-\frac{k_-}{2^\ell-k_+}>0,
\label{eq:denominator}
\end{align}
where the positivity follows from $2^\ell\ge 2^m\ge 2R>k_++k_-$. Combining
\cref{eq:ratio,eq:numerator,eq:denominator} yields
\begin{align}
\Pr_{\pi,\mathrm{Ext}}\left[X_j=1\,\middle|\,H_j\right]
\le\frac{1}{2^\ell-k_+-k_-}\le\frac{1}{2^m-R}\le\frac{1}{R},
\end{align}
where the last two inequalities use $\ell\ge m$, $k_++k_-<R$, and $2^m\ge 2R$.

\paragraph{Step 4.}
Define the worst simulation error for a fixed permutation family $\pi$
by
\begin{align}
e_{n}(\pi)
:=
\max_{x\in\{0,1\}^{n}}
\mathbb{E}_{\mathrm{Ext},J}
\left[
d_{x}(D_{J})
\mid
\pi
\right].
\label{eq:worst-public-input-error}
\end{align}
By Equation~\eqref{eq:random-stage-simulation-error-saisyo}, taking the
expectation over $\pi$, applying Jensen's inequality to the concave
square root, and using \cref{eq:step3goal}
\begin{align}
\mathbb{E}_{\pi}
\left[
e_{n}(\pi)
\right]
\leq
2T
\sqrt{
\frac{
\mathbb{E}_{\pi}
\left[
\max_{x\in\{0,1\}^{n}}\kappa_{\pi}(x)
\right]
}{R}
}
\leq
2T
\sqrt{
\frac{n+2}{R}
}
\leq
\frac{1}{2r(n)n^{2}},
\label{eq:average-worst-public-input-error}
\end{align}
where the final inequality follows from
Equation~\eqref{eq:erasure-stages-complete}.  By Markov's inequality,
\begin{align}
\Pr_{\pi}
\left[
e_{n}(\pi)
>
\frac{1}{r(n)}
\right]
\leq
r(n)\,
\mathbb{E}_{\pi}
\left[
e_{n}(\pi)
\right]
\leq
\frac{1}{2n^{2}},
\label{eq:bad-erasure-probability}
\end{align}
which is summable in $n$.  Hence, by the first Borel--Cantelli lemma,
for the fixed pair $(\mathcal{A},r)$, with probability one over $\pi$
there exists $n_{0}=n_{0}(\pi,\mathcal{A},r)$ such that
$e_{n}(\pi)\leq 1/r(n)$ for every $n\geq n_{0}$; that is,
Equation~\eqref{eq:uniform-public-input-erasure-bound-complete} holds
for every $n\geq n_{0}$ and every $x\in\{0,1\}^{n}$.  There are only
countably many pairs of a uniform oracle machine and a positive
polynomial with finite descriptions; taking the countable intersection
of the corresponding probability-one events completes the proof.
\end{proof}

\section{Proof of \cref{thm:classicaloracle}}
\label{subsec:proof-main-classical-oracle}

\begin{proof}
Apply Corollary~\ref{cor:fixed-oracle-inversion-hardness} with the side
oracle $\operatorname{QBF}$, and apply
Lemma~\ref{lem:uniform-public-input-erasure-complete} with the side
oracle $\operatorname{QBF}$.  Each application gives a probability-one
set of permutation families.  Fix one permutation family $\pi$ in the
intersection of these probability-one sets, and define
\begin{align}
\mathcal{O}
&:=
\left(
V^{\pi},
\operatorname{QBF}
\right).
\label{eq:direct-classical-oracle}
\end{align}

Items~\textup{(1)} and~\textup{(2)} follow from
Lemma~\ref{lem:decision-and-search-hardness}. 

\paragraph{Proof of Item (3).}
Let $\{D_x\}_{x\in\bit^*}$ be a sampling problem such that
$\{D_x\}_{x\in\bit^*}\in\mathsf{SampBQP}^{\mathcal{O}}$.  Then, by the
definition of $\mathsf{SampBQP}$, there exists a uniform QPT oracle
algorithm $\mathcal{Q}$ such that for all $x$ and all $\epsilon>0$,
\begin{align}
\SD\left(
\mathcal{Q}^{\mathcal{O}}\left(x,1^{\lfloor 1/\epsilon\rfloor}\right),
D_x
\right)
\leq
\epsilon.
\label{eq:item3-quantum-sampler-guarantee}
\end{align}
We construct a uniform PPT oracle algorithm $\mathcal{C}$ such that
$\mathcal{C}^{\mathcal{O}}$ satisfies the analogous guarantee.

To apply \cref{lem:uniform-public-input-erasure-complete}, define a
uniform QPT oracle algorithm $\mathcal{A}_{\mathcal{Q}}$ as follows.
Its classical input $z$ is an encoding of a pair $(x,1^{K})$, where the
encoding is chosen so that $N:=|z|\geq K$, and it also receives a bit
$b\in\{0,1\}$.  It ignores $b$ and runs the sampler $\mathcal{Q}$:
\begin{align}
\mathcal{A}_{\mathcal{Q}}^{V^{\pi},\operatorname{QBF}}
\left(\left\langle x,1^{K}\right\rangle,b\right)
&:=
\mathcal{Q}^{\mathcal{O}}\left(x,1^{K}\right).
\label{eq:item3-auxiliary-algorithm}
\end{align}
For $z=\langle x,1^{K}\rangle$ and a patch $\gamma$, define
\begin{align}
P_z
&\coloneqq
\text{the output probability distribution of }
\mathcal{Q}^{V^{\pi},\operatorname{QBF}}\left(x,1^{K}\right),\\
P_{z,\gamma}
&\coloneqq
\text{the output probability distribution of }
\mathcal{Q}^{\gamma,\operatorname{QBF}}\left(x,1^{K}\right).
\label{eq:item3-output-laws}
\end{align}
Apply \cref{lem:uniform-public-input-erasure-complete} to
$\mathcal{A}_{\mathcal{Q}}$ with the accuracy polynomial $r(N):=N$, and
let $\nu_z$ denote the output probability distribution of the patch
sampled from
$\mathcal{E}^{V^{\pi},\operatorname{QBF}}_{\mathcal{A}_{\mathcal{Q}},r}(z)$.
Since the outputs are classical strings, trace distance coincides with
statistical distance, and since $\mathcal{A}_{\mathcal{Q}}$ ignores its
bit $b$, the $b$-averaged bound of the lemma reads: for all
sufficiently large $N$ and simultaneously for every $z\in\{0,1\}^{N}$,
\begin{align}
\mathbb{E}_{\gamma\leftarrow\nu_z}
\left[
\operatorname{SD}\left(
P_z,
P_{z,\gamma}
\right)
\right]
&\leq
\frac{1}{N}.
\label{eq:item3-erasure-error}
\end{align}

We next classically simulate both random objects appearing in this
construction.  First, the procedure producing
$\gamma\leftarrow\nu_z$ can be simulated by a uniform
$\operatorname{PPT}^{V^{\pi},\operatorname{QBF}}$ algorithm. This is because
the procedure consists of classically sampling random bits, classically querying $V^\pi$, 
and running a $\mathsf{QPT}^{\rm QBF}$ algorithm, but
a $\mathsf{QPT}^{\rm QBF}$ algorithm can be simulated with 
a $\mathsf{PPT}^{\rm QBF}$ algorithm within an exponentially small precision.
Hence
there is a distribution $\widehat{\nu}_z$ that can be sampled by $\mathsf{PPT}^{V^\pi,\operatorname{QBF}}$
such that
\begin{align}
\operatorname{SD}\left(\nu_z,\widehat{\nu}_z\right)
&\leq
\negl(N).
\label{eq:item3-patch-simulation-error}
\end{align}
Second, if we know a patch $\gamma$, the circuit
$\mathcal{Q}^{\gamma,\operatorname{QBF}}(x,1^{K})$ can be
simulated with a PPT${}^{\operatorname{QBF}}$ algorithm. 
Hence
there is a distribution $\widehat{P}_{z,\gamma}$ that can be sampled by $\mathsf{PPT}^{\operatorname{QBF}}$ (with knowing $\gamma$)
such that
\begin{align}
\operatorname{SD}\left(P_{z,\gamma},\widehat{P}_{z,\gamma}\right)
&\leq
\negl(N).
\label{eq:item3-patched-output-simulation-error}
\end{align}

Let $N_{0}$ be a threshold beyond which
Equation~\eqref{eq:item3-erasure-error} holds.  Since the permutation
family $\pi$ and the algorithm $\mathcal{Q}$ are fixed in this proof,
$N_{0}$ is a fixed constant and may be hardwired into $\mathcal{C}$.

The classical sampler $\mathcal{C}^{\mathcal{O}}$ operates as follows.
On input $(x,1^{k})$, it sets
\begin{align}
K
&:=
12(k+1)+N_{0},
\label{eq:item3-internal-accuracy}
\\
z
&:=
\left\langle x,1^{K}\right\rangle,
\qquad
N:=|z|.
\label{eq:item3-expanded-input}
\end{align}
It samples a patch $\gamma\leftarrow\widehat{\nu}_z$ using the
classical patch simulator and then samples its output according to
$\widehat{P}_{z,\gamma}$.  This defines a single uniform
$\operatorname{PPT}^{\mathcal{O}}$ algorithm because $K=O(k)$ up to the
fixed additive constant $N_{0}$, and all patch descriptions and all
simulated circuits have size polynomial in $N$.

We now bound its error.  
By convexity of statistical distance and
\cref{eq:item3-erasure-error},
\begin{align}
\operatorname{SD}\left(P_z,
\mathbb{E}_{\gamma\leftarrow\nu_z}
\left[P_{z,\gamma}\right]
\right)
&\leq
\mathbb{E}_{\gamma\leftarrow\nu_z}
\left[
\operatorname{SD}\left(P_z,P_{z,\gamma}\right)
\right]
\leq
\frac{1}{N}.
\label{eq:item3-first-comparison}
\end{align}
By
\cref{eq:item3-patch-simulation-error},
\begin{align}
\operatorname{SD}\left(
\mathbb{E}_{\gamma\leftarrow\nu_z}
\left[P_{z,\gamma}\right],
\mathbb{E}_{\gamma\leftarrow\widehat{\nu}_z}
\left[P_{z,\gamma}\right]
\right)
&\leq
\operatorname{SD}\left(\nu_z,\widehat{\nu}_z\right)
\leq
\negl(N).
\label{eq:item3-second-comparison}
\end{align}
Finally, averaging
Equation~\eqref{eq:item3-patched-output-simulation-error} over
$\gamma\leftarrow\widehat{\nu}_z$ gives
\begin{align}
\operatorname{SD}\left(
\mathbb{E}_{\gamma\leftarrow\widehat{\nu}_z}
\left[P_{z,\gamma}\right],
\mathcal{C}^{\mathcal{O}}(x,1^{k})
\right)
&\leq
\negl(N).
\label{eq:item3-third-comparison}
\end{align}
Combining
Equations~\eqref{eq:item3-first-comparison}--\eqref{eq:item3-third-comparison},
we obtain
\begin{align}
\operatorname{SD}\left(
\mathcal{C}^{\mathcal{O}}(x,1^{k}),
P_z
\right)
&\leq
\frac{2}{N}.
\label{eq:item3-classical-versus-quantum}
\end{align}

By
Equation~\eqref{eq:item3-quantum-sampler-guarantee}, applied with
accuracy $1/K$,
\begin{align}
\operatorname{SD}\left(P_z,D_x\right)
&\leq
\frac{1}{K}.
\label{eq:item3-internal-quantum-error}
\end{align}
Fix any $0<\varepsilon\leq 1$ and set $k:=\lfloor 1/\varepsilon\rfloor$.  
Moreover, $N\geq K$, and therefore the triangle inequality,
Equations~\eqref{eq:item3-classical-versus-quantum} and
\eqref{eq:item3-internal-quantum-error}, and the definition of $K$ give
\begin{align}
\operatorname{SD}\left(
\mathcal{C}^{\mathcal{O}}
\left(x,1^{\lfloor 1/\varepsilon\rfloor}\right)
,
D_x
\right)
&\leq
\frac{2}{N}+\frac{1}{K}
\leq
\frac{3}{K}
\leq
\frac{1}{4(k+1)}
<
\varepsilon.
\label{eq:item3-final-sampling-error}
\end{align}
Thus 
\begin{align}
\mathsf{SampBQP}^{\mathcal{O}}
&\subseteq
\mathsf{SampBPP}^{\mathcal{O}}.
\label{eq:item3-first-inclusion}
\end{align}

\paragraph{Proof of Item (4).}
Let $\Samp^\cO$ be any candidate classically-secure
auxiliary-input 
distributional OWPuzz relative to $\mathcal{O}$.
We construct a
single uniform PPT oracle adversary $\mathcal{A}$ that breaks it.

Fix an auxiliary input $x$ and write $n:=|x|$.  
Define
\begin{align}
\mu_x
&\coloneqq
\text{the output probability distribution of }
\Samp^{V^{\pi},\operatorname{QBF}}(x),\\
\mu_x^{\gamma}
&\coloneqq
\text{the output probability distribution of }
\Samp^{\gamma,\operatorname{QBF}}(x)
\label{eq:item4-output-laws}
\end{align}
for a patch $\gamma$. 

Apply \cref{lem:uniform-public-input-erasure-complete} to the algorithm
$(x,b)\mapsto\Samp^{V^{\pi},\operatorname{QBF}}(x)$, which ignores $b$,
with the accuracy polynomial $r(n):=4p(n)$, and let $\nu_x$ denote the output probability distribution of
the patch sampled from
$\mathcal{E}^{V^{\pi},\operatorname{QBF}}_{\Samp,r}(x)$.  Since the
outputs are classical, the lemma gives, for every sufficiently large
$n$ and every $x\in\{0,1\}^{n}$,
\begin{align}
\mathbb{E}_{\gamma\leftarrow\nu_x}
\left[
\operatorname{SD}\left(
\mu_x,
\mu_x^{\gamma}
\right)
\right]
&\leq
\frac{1}{4p(n)}.
\label{eq:item4-erasure-error}
\end{align}
Moreover, exactly as in the proof of Item~\textup{(3)}
(Equation~\eqref{eq:item3-patch-simulation-error}), there is a uniform
$\operatorname{PPT}^{\mathcal{O}}$ algorithm that samples a patch from
a distribution $\widehat{\nu}_x$ satisfying
\begin{align}
\operatorname{SD}\left(\nu_x,\widehat{\nu}_x\right)
&\leq
\negl(n).
\label{eq:item4-patch-simulation-error}
\end{align}

Because $\Samp^{\gamma,\operatorname{QBF}}(x)$ is a $\mathsf{QPT}$ algorithm querying QBF,
there is a uniform PPT${}^{\operatorname{QBF}}$ algorithm
$\mathsf{Cond}$ such that, for every $n$, every $x\in\{0,1\}^{n}$, and
every patch $\gamma$ of length polynomial in $n$,
\begin{align}
\operatorname{SD}
\left(
\left(\puzz,\mathsf{Cond}(x,\gamma,\puzz)\right)_{(\puzz,\ans)\leftarrow\mu_x^{\gamma}},
\;
\mu_x^{\gamma}
\right)
\leq
2^{-n}.
\label{eq:item4-conditional-sampler}
\end{align}

The adversary $\mathcal{A}_p^{\cO}(x,\mathsf{puzz})$ operates as follows: it
samples $\gamma\leftarrow\widehat{\nu}_x$,
runs
$\widehat{\mathsf{ans}}\leftarrow\mathsf{Cond}(x,\gamma,\mathsf{puzz})$,
and outputs $\widehat{\mathsf{ans}}$. This is a single uniform
$\mathrm{PPT}^{\cO}$ algorithm.

Fix a patch $\gamma$ and let $J_{\gamma}$ denote the distribution of
$(\mathsf{puzz},\widehat{\mathsf{ans}})$ when
$(\mathsf{puzz},\mathsf{ans})\leftarrow\mu_x$ and
$\widehat{\mathsf{ans}}\leftarrow\mathsf{Cond}(x,\gamma,\mathsf{puzz})$.
Let
$\widetilde{J}_{\gamma}$ denote the same distribution but with
$(\mathsf{puzz},\mathsf{ans})\leftarrow\mu_x^{\gamma}$ instead. 
Then
$\mathrm{SD}(J_{\gamma},\widetilde{J}_{\gamma})
\le\mathrm{SD}(\mu_x,\mu_x^{\gamma})$.
Combining with \cref{eq:item4-conditional-sampler} and the triangle inequality,
\begin{equation}\label{eq:pointwise-gamma}
\mathrm{SD}\bigl(J_{\gamma},\mu_x\bigr)
\;\le\;
\mathrm{SD}\bigl(J_{\gamma},\widetilde{J}_{\gamma}\bigr)
+\mathrm{SD}\bigl(\widetilde{J}_{\gamma},\mu_x^{\gamma}\bigr)
+\mathrm{SD}\bigl(\mu_x^{\gamma},\mu_x\bigr)
\;\le\;
2\,\mathrm{SD}\bigl(\mu_x,\mu_x^{\gamma}\bigr)+2^{-n}.
\end{equation}
By
convexity of statistical distance and Equation~\eqref{eq:pointwise-gamma},
\begin{align}\label{eq:final-dist}
\mathrm{SD}\Bigl(
\bigl\{(\mathsf{puzz},\mathcal{A}_p^{\cO}(x,\mathsf{puzz}))\bigr\}
_{(\mathsf{puzz},\mathsf{ans})\leftarrow\mathsf{Samp}^{\cO}(x)}
\,,\;\mu_x\Bigr)
&\;\le\;
2\,\mathbb{E}_{\gamma\leftarrow\widehat{\nu}_x}
\bigl[\mathrm{SD}\bigl(\mu_x,\mu_x^{\gamma}\bigr)\bigr]+2^{-n}\\
&\;\le\;
2\Bigl(
\mathbb{E}_{\gamma\leftarrow\nu_x}
\bigl[\mathrm{SD}\bigl(\mu_x,\mu_x^{\gamma}\bigr)\bigr]
+\mathrm{SD}\bigl(\nu_x,\widehat{\nu}_x\bigr)
\Bigr)+2^{-n}\\
&\;\le\;
\frac{1}{2p(n)}+\negl(n)
\;<\;\frac{1}{p(n)}
\end{align}
for every sufficiently large $n$ and every $x\in\{0,1\}^{n}$. 
\end{proof}

\section{Proof of Theorem~\ref{thm:main-quantum-oracle}}
\label{subsec:proof-main-quantum-oracle}
Apply Corollary~\ref{cor:fixed-oracle-inversion-hardness} with the side
oracle $\cH$, and apply
Lemma~\ref{lem:uniform-public-input-erasure-complete} with the side
oracle $\cH$.  Each application gives a probability-one
set of permutation families.  Fix one permutation family $\pi$ in the
intersection of these probability-one sets, and define
\begin{align}
\mathcal{O}
&:=
\left(
V^{\pi},
\cH
\right).
\label{eq:direct-classical-oracle}
\end{align}

Items (1) and (2) are shown in a similar way as the proof of \cref{thm:classicaloracle}.

We prove Item (3). 
Assume toward a contradiction that an auxiliary-input EFI pair (generator) $G$ exists
relative to $\cO$, and let $p$ be the positive polynomial guaranteed by \cref{def:AIEFI}.
Define
\begin{equation}
\rho_{b,x}:=G^{\cO}(x,b).
\end{equation}
Below we construct a uniform QPT oracle distinguisher $\cD$ that depends only on $G$ and $p$, and then derive a
contradiction from the infinite set $S_\cD$ that \cref{def:AIEFI} guarantees for $\cD$.

Apply \cref{lem:uniform-public-input-erasure-complete} to the algorithm $(x,b)\mapsto G^{V^{\pi},\cH}(x,b)$ with
the accuracy polynomial
$r(n)\coloneqq16p(n)$.
The lemma
provides a random classical patch
\begin{equation}
\Gamma_x\leftarrow\mathcal{E}^{V^{\pi},\cH}_{G,r}(x),
\end{equation}
sampled independently of the later challenge bit. For every realization
$\gamma$ of $\Gamma_x$, define the patched states
\begin{equation}
\sigma^{\gamma}_{b,x}:=G^{\gamma,\cH}(x,b).
\end{equation}
Then
for every sufficiently large $n$ and every $x\in\{0,1\}^{n}$, 
\begin{equation}\label{eq:patch-err}
\frac{1}{2}\sum_{b\in\{0,1\}}
\mathbb{E}_{\gamma\leftarrow\Gamma_x}
\left[\mathrm{TD}\left(\rho_{b,x},\sigma^{\gamma}_{b,x}\right)\right]
\le\frac{1}{16p(n)}.
\end{equation}

Let $j(n)$ be a polynomial upper bound on the largest level of $\cH$ queried
by $G$ on $n$-bit inputs. Such a bound exists because the level index is
encoded in unary. For fixed $(x,\gamma)$, the patched generator
$G^{\gamma,\cH}(x,b)$ makes no query to $V^{\pi}$: the patch $\gamma$ is an
explicit finite set, and each patch query is implemented by an explicit
reversible lookup circuit. The patched generator therefore uses only
levels at most $j(n)$ of $\cH$ and explicit gates, and so the two patched
generators are valid circuit descriptions in a level-$(j(n)+1)$ query to
$\cH$.

Construct a uniform QPT distinguisher $\cD$ as follows. On input $(x,\tau)$,
it samples $\gamma\leftarrow\Gamma_x$ by running
$\mathcal{E}^{V^{\pi},\cH}_{G,r}(x)$, and queries $\cH^{(j(|x|)+1)}$ on the
descriptions of the patched $G$, the classical data
$(x,\gamma)$, the answer bit initialized to 0, and the challenge state $\tau$. 
After the query, $\cD$ measures the answer bit, and outputs it. Let
$M_{x,\gamma}$ be the projector onto the positive eigenspace of
$\sigma^{\gamma}_{1,x}-\sigma^{\gamma}_{0,x}$,
and set
\begin{equation}
e_{b,x,\gamma}:=\mathrm{TD}\left(\rho_{b,x},\sigma^{\gamma}_{b,x}\right).
\end{equation}
The distinguishing advantage satisfies
\begin{align}
\Pr\left[1\leftarrow \cD^{\cO}(x,\rho_{1,x})\right]
-\Pr\left[1\leftarrow \cD^{\cO}(x,\rho_{0,x})\right]
&=\mathbb{E}_{\gamma}\,
\mathrm{Tr}\left[M_{x,\gamma}\left(\rho_{1,x}-\rho_{0,x}\right)\right]\\
&\ge\mathbb{E}_{\gamma}\left[
\mathrm{TD}\left(\sigma^{\gamma}_{0,x},\sigma^{\gamma}_{1,x}\right)
-e_{0,x,\gamma}-e_{1,x,\gamma}\right]\\
&\ge\mathrm{TD}\left(\rho_{0,x},\rho_{1,x}\right)
-2\,\mathbb{E}_{\gamma}\left[e_{0,x,\gamma}+e_{1,x,\gamma}\right]\label{eq:adv}.
\end{align}
\cref{eq:patch-err} implies
\begin{equation}\label{eq:avg-err}
\mathbb{E}_{\gamma}\left[e_{0,x,\gamma}+e_{1,x,\gamma}\right]
\le\frac{1}{8p(n)}.
\end{equation}
Combining \cref{eq:adv,eq:avg-err}, for every sufficiently large $n$
and every $x\in\{0,1\}^n$,
\begin{align}
\Pr\left[1\leftarrow \cD^{\cO}(x,\rho_{1,x})\right]
-\Pr\left[1\leftarrow \cD^{\cO}(x,\rho_{0,x})\right]
\geq \mathrm{TD}\left(\rho_{0,x},\rho_{1,x}\right)-\frac{1}{4p(n)}.
\label{eq:advallx}
\end{align}

By the security of $G$ applied to the distinguisher $\cD$, there exist an
infinite set $S_\cD\subseteq\{0,1\}^*$ and a negligible function
$\mathrm{negl}$ such that \cref{eq:EFIfar,eq:EFIind} hold for every
$x\in S_\cD$. Because there are only finitely many strings of each length,
$S_\cD$ contains strings of unbounded length. Hence, for every sufficiently
long $x\in S_\cD$, \cref{eq:advallx} and the statistical farness
\cref{eq:EFIfar} give
\begin{align}
\Pr\left[1\leftarrow \cD^{\cO}(x,\rho_{1,x})\right]
-\Pr\left[1\leftarrow \cD^{\cO}(x,\rho_{0,x})\right]
\geq \frac{1}{p(|x|)}-\frac{1}{4p(|x|)}=\frac{3}{4p(|x|)},
\end{align}
while the computational indistinguishability \cref{eq:EFIind} gives
\begin{align}
\Pr\left[1\leftarrow \cD^{\cO}(x,\rho_{1,x})\right]
-\Pr\left[1\leftarrow \cD^{\cO}(x,\rho_{0,x})\right]
\leq \mathrm{negl}(|x|).
\end{align}
Since $3/(4p(|x|))>\mathrm{negl}(|x|)$ for all sufficiently large $|x|$,
this is a contradiction.

\ifnum\anonymous=1
\else
\vspace{1cm}
\textbf{Acknowledgements.}
TM is supported by
JST CREST JPMJCR23I3,
JST Moonshot R\verb|&|D JPMJMS2061-5-1-1, 
JST FOREST, 
MEXT QLEAP, 
the Grant-in Aid for Transformative Research Areas (A) 21H05183,
and 
the Grant-in-Aid for Scientific Research (A) No.22H00522.
\fi

\textbf{AI Disclosure.}
We used 
Claude Fable 5 to assist with
writing proofs and Introduction, and ChatGPT-5.6 Sol  for proofreading.
The authors verified the correctness and originality of all content including references.

\ifnum\submission=0
\bibliographystyle{alpha} 
\else
\bibliographystyle{splncs04}
\fi
\bibliography{abbrev3,crypto,reference,text}

@string{ieee =                  {IEEE}}

@string{springer =              "Springer"}

@InProceedings{STACS:AdcCle02,
  author =       "Mark Adcock and
                  Richard Cleve",
  title =        "A Quantum {Goldreich}-{Levin} Theorem with Cryptographic Applications",
  booktitle =    "STACS 2002, 19th Annual Symposium on Theoretical Aspects of Computer Science",
  address =      "Antibes - Juan les Pins, France",
  month =        mar # "~14--16,",
  year =         2002,
  editor =       "Helmut Alt and Afonso Ferreira",
  volume =       2285,
  series =       "Lecture Notes in Computer Science",
  pages =        "323--334",
  publisher =    "Springer, Heidelberg, Germany",
  doi =          "10.1007/3-540-45841-7_26",
}

@article{BBBV97,
  author  = {Charles H. Bennett and Ethan Bernstein and
             Gilles Brassard and Umesh V. Vazirani},
  title   = {Strengths and Weaknesses of Quantum Computing},
  journal = {{SIAM} Journal on Computing},
  volume  = {26},
  number  = {5},
  pages   = {1510--1523},
  year    = {1997},
  doi     = {10.1137/S0097539796300933}
}

@inproceedings{CCC:AarChe17,
    author       = {Scott Aaronson and
		    Lijie Chen},
    title        = {
Complexity-theoretic foundations of quantum supremacy experiments
    },
    year         = {2017},
     organization={
     CCC'17: Proceedings of the 32nd Computational Complexity Conference
     }
}

@article{Aar14,
  author    = {
 Scott Aaronson 
},
  title     = {
  The equivalence of sampling and searching
  },
  journal   = {
Theory of Computing Systems
  },
  volume    = {55},
  number    = {},
  pages     = {281-298},
  year      = {2014},
}

@article{ITCS:ABK24,
  author    = {
       Scott Aaronson and Harry Buhrman and William Kretschmer
},
  title     = {
      A Qubit, a Coin, and an Advice String Walk into a Relational Problem
  },
  journal   = {
      ITCS
  },
  volume    = {},
  number    = {},
  pages     = {},
  year      = {2024},
}

@article{BreMonShe16,
  author    = {
   Michael J. Bremner and Ashley Montanaro and Dan J. Shepherd
},
  title     = {
   Average-Case Complexity Versus Approximate Simulation of Commuting Quantum Computations
  },
  journal   = {Physical Review Letters},
  volume    = {117},
  number    = {},
  pages     = {080501},
  year      = {2016},
}

@misc{ITCS:BCQ23,
      author = {Zvika Brakerski and Ran Canetti and Luowen Qian},
      title = {On the computational hardness needed for quantum cryptography},
      howpublished = {ITCS 2023},
      year = {2023},
}

@inproceedings{Impagliazzo95,
  author    = {Russell Impagliazzo},
  title     = {A Personal View of Average-Case Complexity},
  booktitle = {Proceedings of the Tenth Annual Structure in Complexity Theory Conference,
               Minneapolis, Minnesota, USA, June 19-22, 1995},
  pages     = {134--147},
  publisher = {{IEEE} Computer Society},
  doi       = {10.1109/SCT.1995.514853},
  year      = {1995},
}

@article{Kre21,
  author    = {William Kretschmer},
  title     = {Quantum pseudorandomness and classical complexity},
  journal   = {TQC 2021},
  volume    = {},
  pages     = {},
  year      = {2021},
  url       = {},
   doi = {10.4230/LIPICS.TQC.2021.2},
}

\appendix

\clearpage


\end{document}